\documentclass[11pt]{article}

\usepackage{amssymb}
\usepackage{amsmath}
\usepackage{amsthm}
\usepackage{fancyhdr}
\usepackage{graphics}
\usepackage{graphicx}
\usepackage{environ}
\usepackage{framed}
\usepackage{url}
\usepackage{authblk}
\usepackage[margin=1.0in]{geometry}

\newcommand{\F}{\mathbb{F}}
\newcommand{\Pt}{\mathrm{P}}

\newcommand{\AM}{\mathrm{AM}}
\newcommand{\MA}{\mathrm{MA}}
\newcommand{\IP}{\mathrm{IP}}
\newcommand{\NP}{\mathrm{NP}}

\newcommand{\psd}{\mathrm{psd}}

\theoremstyle{plain}
\newtheorem{lemma}{Lemma}[section]
\newtheorem*{lemma*}{Lemma}
\newtheorem{theorem}[lemma]{Theorem}
\newtheorem{corollary}[lemma]{Corollary}
\newtheorem*{corollary*}{Corollary}

\theoremstyle{definition}
\newtheorem{definition}[lemma]{Definition}

\title{Fine Grained Pseudo-Deterministic Proofs}
\author[1]{Michel Goemans}
\author[2]{Shafi Goldwasser}
\author[3]{Dhiraj Holden}
\affil[1]{Department of Mathematics, MIT goemans@math.mit.edu}
\affil[2]{Simons Institute for the Theory of Computing, UC Berkeley shafi.goldwasser@gmail.com}
\affil[3]{Department of Electrical Engineering and Computer Science, MIT dholden@mit.edu}

\begin{document}
\maketitle
\abstract{

In this paper we study {\it doubly-efficient pseudo-deterministic proofs} for polynomial time search problems:  interactive proofs where a polynomial time prover can convince a probabilistic verifier that a solution to a search problem is a "canonical" solution and where the verifier verification time is "little-o" of the complexity of {\it finding} any solution to the search problem,
canonical or otherwise. This extends the notion of pseudo-deterministic proofs of \cite{GGH18} for non-polynomial time problems which did not require the prover to run in polynomial time.  

We exhibit doubly-efficient pseudo-deterministic algorithms for a host of natural and well studied problems, in which the verifier run time to verify that a solution is canonical is significantly faster than the best known algorithm for finding a solution as follows:

\begin{itemize}

\item

We show a doubly-efficient pseudo-deterministic $\NP$ proof for {\bf  linear programming} where the canonical solution which the prover will provide is the lexicographically greatest optimal solution for the LP. To this end, we show how through strong duality and perturbing the linear program this canonical solution can be both computed efficiently by the prover, and verified by the verifier.   The time of the verifier is $O(d^2 )$ for a linear program with integer data and at most $d$ variables and constraints, whereas the time to solve such linear program is $\tilde{O}(d^{\omega} )$ by any known randomized algorithms \cite{Cohen18} for $\omega$  the exponent for fast matrix multiplication .

\item
In the context of fine-grained search problems, we show:
\begin{enumerate}
    \item 

A doubly-efficient pseudo-deterministic $\NP$ proof for {\bf 3-SUM }and problems reducible to 3-SUM where the prover is a $O(n^2)$ time algorithm and the verifier takes time $\tilde{O}(n^{1.5})$. 

    \item

A doubly-efficient pseudo-deterministic $\NP$ proof for the {\bf  hitting set problem} where the verifier runs in time $\tilde{O}(m)$ and the prover runs in time $\tilde{O}(m^2)$ where $ m = \sum_{S \in \mathcal{S}} |S| + \sum_{T \in \mathcal{T}} |T|$ for inputs  collections of sets $\mathcal{S}, \mathcal{T}$.
    \item
A doubly-efficient pseudo-deterministic $\NP$ proof for the {\bf Zero Weight Triangle problem} where the verifier runs in time $\tilde{O}(n^{2 + \omega/3})$ and the prover runs in randomized time $\tilde{O}(n^3)$. The Zero Weight Triangle problem is equivalent to the {\bf All-Pairs Shortest Path problem}, a well-studied problem that is the foundation of many hardness results in graph algorithms \cite{VW13,WW10}, under sub-cubic reductions. 

\item 
A doubly-efficient pseudo-deterministic $\MA$ proof for the {\bf Orthogonal Vectors problem}, which is the problem of finding a pair of vectors that are orthogonal given $n$ vectors of length $d$ ,  where the verifier runs in time $\tilde{O}(nd)$ and the prover runs in time $\tilde{O}(n^2d^2)$. In addition, we show a doubly-efficient pseudo-deterministic $\MA$  proof for $k$-Clique where the verifier runs in time $\tilde{O}n^{\lfloor k/2 \rfloor + 2}$ and the prover runs in time $\tilde{O}(n^k)$. 
\end{enumerate}
\end{itemize}

}
\newpage
\begin{section}{Introduction}
Pseudo-deterministic algorithms, introduced by Gat and Goldwasser \cite{GG11},  
are probabilistic (polynomial-time) algorithms for search problems that, with high probability, {\it find} a unique output for each input except with negligible error probability.  Such output for input $x$ is referred to as the ''canonical'' output for  $x$.
Algorithms that satisfy the aforementioned condition are of importance whenever uniqueness or "reproducibility" of the answer is important. This is of
particular relevance in a distributed or parallel setting when an algorithm is executed by multiple parties for whom it is challenging (for reasons of trust or efficiency requirement) to {\it agree} on a common sequence of unbiased random coins.  

More recently, Goldwasser, Grossman and Holden \cite{GGH18} extended the study to pseudo-deterministic
interactive proofs for search problems, denoted $\psd\IP$. 
The new goal was to
{\it prove}  to a probabilistic polynomial time verifier that a solution to a search problem is canonical.
The motivation was to address those search problems for which polynomial time algorithms are not known and for which many solutions are possible, such as for graph isomorphism. In this case the
 {\it search problem} is to find an isomorphism between two graphs if one exists and an example of a  {\it canonical solution} would be the lexicographically smallest isomorphism.
One may think of the powerful prover as aiding the probabilistic polynomial time verifier to find canonical solutions to search problems, 
with high probability over the randomness of the verifier.  The challenge is that a malicious prover should not be able to convince the verifier to accept any solution other than the unique canonical one using a constant round interaction.\footnote{ If unbounded number of rounds are allowed, the $IP=PSPACE$ characterization implies that $\psd\IP = IP$.}

In this paper, we turn our attention to studying pseudo-deterministic proofs with the extra requirement that the prover runs in polynomial time in the complexity of the problem and the verifier can {\it verify} that a solution provided by the prover is canonical significantly more efficiently than solving the problem without the presence of the prover [ in particular, in all the protocols we design, the verifier runs  in "little-o" time of the best known algorithms for finding a solution). We call these {\it doubly-efficient pseudo-deterministic proofs}. Our aim  is to use doubly-efficient pseudo-deterministic proofs for {\it polynomial time problems}, where the prover runs in polynomial time in the complexity of the problem. 
Indeed,  for all of the doubly-efficient pseudo-deterministic proofs presented below, with the exception of linear programming, the runtime of the prover is at most a constant times the runtime of the best known deterministic algorithm.

\subsection{Our Results}

\subsubsection*{A new notion: Doubly-efficient pseudo-deterministic interactive proofs}

We define  {\it doubly-efficient pseudo-deterministic interactive proofs} for a search problem $R$ of time complexity $T(n)$ (consisting of pairs
$(instance,solution)$)  with associated canonization function $c$ as 
a pair of interacting algorithms:   a probabilistic polynomial time prover  which runs in time $poly(T(n))$ and 
a probabilistic verifier which runs in time $o(T(n))$ which on a common input instance $x$ engage in constant number of rounds of interaction at the end of which with high probability 
the verifier outputs a canonical solution $y=c(x)$ if any solution exists and otherwise rejects $x$.
Analogously to the case of completeness in interactive proofs for languages, we require that for every input $x$, there exists an honest prover which can send the correct solution $c(x)$ to the verifier when one exists. Analogously to the case of soundness, no dishonest prover can cause the verifier to output a solution other than $c(x)$ (the canonical one) (except with very low probability).

A few remarks are in order. 
\begin{itemize}
\item
Naturally this question is particularly interesting for search problems for which a lower bound on its worst case complexity $T(n)$  is known or has been widely conjectured. 
This will drive our choice of problems for which we show doubly efficient pseudo-deterministic proofs.

\item The setting of doubly-efficient interactive proofs naturally models a cryptographic settings
where users wish to have access
to common cryptographic system-wide keys or parameters, such as a  pair $(g,p)$ for ${Z_p}$ with prime $p$  and generator $g$ for a given input length $n$.
A  central authority (with additional computational power) can of course choose the common system-wide parameter and broadcast it to all, but then who is to say that the central party did not chose its randomness in a way that would force an output for which the trusted center knew some ``trapdoor'' information which would enable it to break the underlying cryptographic security?  Viewing the generation of a cryptographic key as a solution to a search problem $R$ per security parameter, a doubly-efficient pseudo-deterministic proof for $R$ would ensure that  the prover had no choice in which parameter to broadcast as he could prove that his solution is canonical.

\item Doubly-efficient pseudo-deterministic proofs for search problems $R$ with associated
canonization function $c$ are closely related to {\bf computation delegation} of computing $c(x)$ on input $x$. 
The delegation problem was posed by Goldwasser, Kalai, and Rothblum \cite{GKR08} and  become known under the name doubly-efficient interactive proof systems. 
The difference in the requirements is that \cite{GKR08}
require the verifier to run in linear (up to log factors) time and addresses deterministic computations. 
Doubly-efficient interactive proofs have been shown by \cite{GKR08}  for
log-space uniform sets in NC (or, more generally, to inputs that are acceptable by log-space uniform bounded-depth circuits, where the number of rounds in the proof system is linearly related to the depth of the circuit). Reingold, Rothblum and Rothblum \cite{RRR}
showed that any set  decidable in polynomial-time by an algorithm of space complexity $s(n)\leq n^{0.499}$, has a constant-round interactive proof system in which the prover runs  in polynomial time and the verifier runs in time $\tilde{O}(n)$.
Finally Goldreich and Rothblum \cite{GR18}  show direct constructions of doubly-efficient interactive proof systems for problems in $\Pt$ that are believed to have 
relatively high complexity such as  $t$-CLIQUE and $t$-SUM. 
In our work we turn our attention to a host of natural problem whose upper bound has been under very wide investigation for decades.

We remark that works  on proof systems and delegation did not stay within the realm of theory
alone. Rather, they became the theoretical basis for several system implementations of a delegation system as
they offered reasonably efficiently realizable protocols. Indeed, there is a flourishing literature surrounding
the refinement and implementation of these theoretical protocols \cite{BackesBFR15,BackesFR13,Ben-SassonCG0MTV14,Ben-SassonCGTV13,Ben-SassonCTV14,BraunFRSBW13,ChiesaTV15,CormodeMT12,
CostelloFHKKNPZ15,FioreGP14,KosbaPPSST14,ParnoHG016,SettyBVBPW12,SettyMBW12,SettyVPBBW12,Thaler13,ThalerRMP12,VuSBW13,WahbySRBW15} (see \cite{WalfishB15} for a survey). 
Similarly, We hope that the presently proposed study of doubly-efficient pseudo-deterministic proofs will impact practice (and beyond).

\item Our results about fine-grained complexity problems build on the existence of doubly-efficient proofs for the existence and non-existence of solutions, shown in a paper by Williams and a paper by Carmosino, Gao, Impagliazzo, Mihajlin, Paturi, and Schneider \cite{CGI16}, \cite{W16}. In particular, finding doubly-efficient proofs of nonexistence for certain problems would give strong evidence for the existence of doubly-efficient pseudo-deterministic proofs in places where we do not believe there are faster deterministic algorithms. 
\end{itemize}

\subsubsection*{Doubly-efficient pseudo-deterministic algorithms for linear programming and fine-grained complexity problems}

\paragraph{Linear Programming:} We show a doubly-efficient pseudo-deterministic proof for the linear programming problem.  Verifying an optimal solution to a linear programming problem can be done thanks to {\it strong duality}: there exists a solution to the dual problem with the same value as the solution to the primal problem. We show that a special optimal solution, namely the lexicographically greatest solution, can be efficiently obtained by the prover, and that the prover can convince the
verifier  that the LP solution it gives to the verifier is indeed the lexicographically greatest solution; this is done through perturbing the linear program and strong duality. 
More concretely, every linear program (say, where the objective is to maximize) has a corresponding dual linear program, a minimization problem, with the property that (i) (weak duality) any feasible solution to the dual provides an upper bound on the optimal primal value and (ii) (strong duality) there exists an optimal solution to the dual with the same value as the primal optimal solution.  Furthermore, there exist compact polynomial-sized solutions to the primal and dual linear programs. Therefore such a  polynomial-sized feasible solution to the dual with an equal value as a primal solution provides a compact certificate for the optimality of this primal solution. 

The currently best known time to solve a  linear program with integer data and at most $d$ variables and constraints is $\tilde{O}(d^{\omega})$ randomized \cite{Cohen18} where  $\omega$  corresponds to
the exponent for fast matrix multiplication which is  currently at $\approx 2.37$ and $\tilde{O}()$ hides polylog factors including a $\log(1/\delta)$ factor to account for the accuracy $\delta$ in solving the linear program. The time of the verifier to
verify a pair of primal and dual optimal solution is only $O(d^2)$
 as this only requires matrix-vector multiplication.

\paragraph{Central Problems studied in fine-grained complexity:}
We  show doubly-efficient pseudo-deterministic proofs for several fine-grained complexity problems where the verifier significantly beats the conjectured time. The challenge is to find a proof where the prover's running time is (almost) the same as the running time of the deterministic algorithm to find any solution (within a polylogarithmic factor of the simplest deterministic algorithm), not necessarily canonical. In the case of two of our problems, making the running time close to the running time of the deterministic algorithm requires the prover to run in a randomized fashion. We also remark that in all cases the canonical solution is the lexicographically smallest solution.
 
The {\bf 3-SUM problem} has an easy $O(n^2)$ time algorithm which can be improved by polylogarithmic factors. It is an outstanding open question whether there is an algorithm that significantly improves  $O(n^2)$. Finding such an algorithm would yield algorithms for a host of other problems in computational geometry \cite{GO95, BGO97}.   Here, we show a doubly-efficient pseudo-deterministic proof that outputs the lexicographically first such triple of elements where the verifier takes time $\tilde{O}(n^{1.5})$ and the prover runs in time $\tilde{O}(n^{2})$.  We crucially use the fact that  \cite{CGI16} gives a nondeterministic proof that there is no triple of elements that sum to 0 where the verifier takes time $\tilde{O}(n^{1.5})$. 

The {\bf hitting set problem} is the problem of finding a set in a collection of sets that intersects every set in a different collection of sets. We show 
 a pseudo-deterministic proof for the hitting set problem where the verifier runs in time $\tilde{O}(m)$ and the prover runs in time $\tilde{O}(m^2)$ where
 $ m = \sum_{S \in \mathcal{S}} |S| + \sum_{T \in \mathcal{T}} |T|$ for inputs $\mathcal{S}, \mathcal{T}$ collections of sets.
 This problem has been conjectured to take $m^{2 - o(1)}$ time \cite{V15}. 

The {\bf All-Pairs Shortest Path } problem is a well-studied problem that is the foundation of many hardness results in graph algorithms \cite{VW13,WW10}. In particular, the {\bf Zero Weight Triangle} problem is equivalent to the All-Pairs Shortest Path problem under subcubic reductions. 
We show a doubly efficient pseudo-deterministic proof for the Zero Weight Triangle problem  where the verifier runs in time  $\tilde{O}({n^{2 + \omega/3}})$ and the prover runs in randomized time $\tilde{O}(n^3)$. It is believe that there does not exist a deterministic solution to this problem running in time $n^{3 - \epsilon}$, suggesting that the running time of the prover is close to the optimal running time for a deterministic algorithm.

The {\bf Orthogonal Vectors} problem was one of the first problems studied in fine-grained complexity, and is the basis for almost all hardness results based on the hardness of SETH (Strong Exponential Time Hypothesis). We show a doubly-efficient pseudo-deterministic $\MA$ proof for the Orthogonal Vectors problem where the verifier runs in time $\tilde{O}(nd)$ and the prover runs in time $\tilde{O}(n^2d^2)$. It is believed that there does not exist a deterministic solution to this problem running in time $n^{2 - \epsilon}2^{o(d)}$, suggesting that the running time of the prover is close to the optimal running time for a deterministic algorithm.

The {\bf k-Clique} problem is the problem of finding a clique of size $k$ in a graph. We show a doubly-efficient pseudo-deterministic $\MA$ proof for the $k$-Clique problem where the verifier runs in time $\tilde{O}(n^{\lfloor k/2 \rfloor + 2})$ and the prover runs in time $\tilde{O}(n^k)$.

\subsubsection*{Outline of the Techniques}

Our techniques take on the following flavor: For a search problem $R$, defined by $R(x,y)$ if and only if $y$ is a solution to $x$, the pseudo-deterministic algorithm, given $x$, finds the lexicographically first $y$ such that  $R(x,y)$. To do this, it asks whether there exists $y'$ such that $(x,0y') \in R$, $y'$ such that $(x,1y') \in R$, etc. and finds the first $y$ such that $R(x,y)$ recursively. 
The notion of "lexicographically first" generalizes to
allow other orderings and other encodings of the input.
This suggests that more generally doubly-efficient pseudo-deterministic proofs for search are
the ones where there is a doubly-efficient proof of {\bf existence} and
a doubly-efficient proof of {\bf nonexistence} of solutions to said search problem. A more general theorem (Lemma \ref{lemma}) follows under general conditions.

\end{section}

\begin{section}{Preliminaries}
In this section we will introduce concepts needed to give pseudo-deterministic proofs that improve on the best known deterministic algorithms for problems studied in fine-grained complexity. 

\begin{definition} [Search Problem] 
A {\it search problem} is a relation $R$ consisting of pairs $(x,y)$ and we define $L_{R}$ to be the set of $x$ such that $\exists y (x,y) \in R$. 
\end{definition}

The goal of an algorithm solving a search problem is to find a $y$ such that $(x,y) \in R$.   The focus of pseudo-determinism is to give algorithms for search problems that find canonical solutions; a pseudo-deterministic algorithm will output the same solution to a search problem with high probability over its randomness. \cite{GGH18} extended the notion of pseudo-determinism to interactive proofs and brought the concept of $\NP$ search problems with unique answers under the umbrella of pseudo-determinism. We will refer to this work's definition of a pseudo-deterministic proof. The pseudo-deterministic proofs in our setting will always either output the unique solution or $\bot$. 

\begin{definition}[Pseudo-deterministic proof \cite{GGH18}]
A search problem $R$ is in \textit{pseudo-deterministic} $\IP$ (often denoted $\psd\IP$) if there exists a function $c$ where all $x \in L_R$ satisfy $(x, c(x)) \in R$, and an interactive protocol between a probabilistic polynomial time verifier algorithm $V$ and
a prover (unbounded algorithm) $P$ such that for every $x \in L_R$:
\begin{enumerate}
\item (Canonical Completeness)
There exists a $P$ such that $\Pr_r[(P,V)(x,r)=c(x)] \ge {2\over 3}$. (We use $(P, V)(x, r)$ to denote the output of the verifier $V$ when interacting with prover $P$ on input $x$ using randomness $r$).
\item (Canonical Soundness)
For all $P'$, $\Pr_r[(P',V)(x,r)=c(x)$ or $\bot]\ge{2\over 3}$.
\end{enumerate}
And (Standard Soundness) for every $x \notin L_R$, for all provers $P'$, $\Pr_r[(P',V)(x,r) \neq \bot] \le {1\over 3}$.
\end{definition}

This is analogous to the definition of pseudo-deterministic NP, except we allow the prover and verifier to interact. In the setting we consider, the prover and verifier both run in polynomial time, with the prover given more time than the verifier. Our goal is to construct pseudo-deterministic proofs for problems such that the verifier runs in time faster than the best known deterministic algorithm for the problem. 

\end{section}

\begin{section}{Doubly-efficient pseudo-deterministic proofs}
We want to extend the concept of pseudo-deterministic proofs to the setting where the prover also runs in polynomial time, and we want to extend the concept of doubly-efficient interactive proofs to the setting where the verifier outputs a unique solution. Both of these tasks are accomplished by introducing {\it doubly-efficient pseudo-deterministic proofs} : proofs where both the verifier and prover run in polynomial time, the verifier running in time asymptotically faster, and where the verifier will output a unique solution given an input. 

\begin{definition}
A $(t_1(n),t_2(n))$ pseudo-deterministic proof is a pseudo-deterministic proof where the verifier $V$ runs in (probabilistic) time $t_1(n)$ and the prover $P$ runs in (probabilistic) time $t_2(n)$.
\end{definition}

Ideally, we want the prover to run in time almost equal to the deterministic running time of the problem, as this means the total work is not much more than the work of solving this problem deterministically. However, we say that  pseudo-deterministic proof is {\it non-trivial} as long as the verifier runs faster than the deterministic running time of the problem.
To demonstrate the concept, we will consider the pseudo-deterministic proof for graph isomorphism. The prover from \cite{GGH18} only needs the power to solve graph isomorphism. We know from \cite{B16} that graph isomorphism is in quasi-polynomial time. Thus, the result of \cite{GGH18} about graph isomorphism can be restated as:

\begin{corollary}
Graph Isomorphism has a $(poly(n),quasipoly(n))$ pseudo-deterministic proof. 
\end{corollary}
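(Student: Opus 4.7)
The plan is to simply instantiate the pseudo-deterministic interactive protocol of \cite{GGH18} for graph isomorphism with Babai's quasi-polynomial time algorithm \cite{B16} as the engine driving the prover's computations, and then verify that the prover's total running time remains quasi-polynomial while the verifier's running time stays polynomial.

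First, I would recall the shape of the \cite{GGH18} protocol. On input $(G_1, G_2)$, the canonical solution is defined to be the lexicographically smallest isomorphism $\pi: V(G_1) \to V(G_2)$, when one exists. The protocol performs a bit-by-bit (or vertex-by-vertex) prefix search: having fixed a partial assignment $\pi_{<i}$, the prover commits to the smallest value $v$ such that some isomorphism extends $\pi_{<i}$ by $\pi(i) = v$. To convince the verifier, the prover gives an NP-style witness (an actual isomorphism extending the claimed prefix) for \emph{existence}, and uses the standard constant-round $\coAM$ protocol for graph non-isomorphism (on appropriately color-labeled augmentations of $G_1$ and $G_2$) to prove \emph{non-existence} of any smaller extension. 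Iterating over the $n$ vertices yields a constant-round interactive proof whose verifier runs in polynomial time.

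Second, I would examine what the prover actually needs to compute. Each prefix query — ``is there an isomorphism from $G_1$ to $G_2$ extending the partial map $\pi_{<i} \cup \{i \mapsto v\}$?'' — can be encoded as a single graph isomorphism decision on vertex-labeled graphs of size $\mathrm{poly}(n)$, by attaching distinguishing gadgets to the already-matched vertices. Similarly, producing the NP witness (an extending isomorphism) reduces by self-reducibility to $\mathrm{poly}(n)$ such decisions, and the messages the prover must send in the $\coAM$ subprotocol are likewise determined by a polynomial number of GI queries. By \cite{B16}, each such query is decidable in time $2^{O((\log n)^c)}$ for a constant $c$, so the overall prover running time is $\mathrm{poly}(n) \cdot 2^{O((\log n)^c)} = \mathrm{quasipoly}(n)$, while the verifier's complexity is unchanged from \cite{GGH18} and thus remains $\mathrm{poly}(n)$. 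Canonical completeness, canonical soundness, and standard soundness are inherited directly from the \cite{GGH18} analysis, since we have only replaced an unbounded oracle for GI with a concrete quasi-polynomial-time algorithm for the same task.

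The only point that requires care — and is really the main obstacle to watch for — is ensuring that \emph{every} nontrivial computation the \cite{GGH18} prover carries out is expressible as a (polynomial-size, polynomial-count) sequence of GI decisions, rather than, say, requiring arbitrary $\#\mathsf{P}$ or counting capability. For the lex-first isomorphism canonicalization, however, self-reducibility of GI makes this straightforward: both search (finding an extending isomorphism) and decision (existence of an extension) reduce in polynomial time to GI decision on modified graphs, which is exactly what Babai's algorithm solves. This establishes the claimed $(\mathrm{poly}(n), \mathrm{quasipoly}(n))$ pseudo-deterministic proof.
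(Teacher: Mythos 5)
Your proposal is essentially the paper's argument, just spelled out in more detail: the paper observes that the \cite{GGH18} prover for graph isomorphism needs only the power to decide GI, and that substituting Babai's quasi-polynomial-time algorithm \cite{B16} immediately yields a $(\mathrm{poly}(n),\mathrm{quasipoly}(n))$ pseudo-deterministic proof. Your extra care about self-reducibility and counting GI queries is a correct and useful elaboration but does not change the route.
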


A large class of pseudo-deterministic algorithms have the following format: for a search problem $R$, the pseudo-deterministic algorithm, given $x$, finds the lexicographically first $y$ such that $R(x,y)$. To do this, it asks whether there exists $y'$ such that $(x,0y') \in R$, $y'$ such that $(x,1y') \in R$, etc. and finds the first $y$ such that $R(x,y)$ recursively. For instance, \cite{GG11} gives a pseudo-deterministic algorithm for testing if a polynomial is non-zero by finding the lexicographically first non-zero solution. Given $p(x_1,...,x_n)$, the algorithm tests if $p(0,...,x_n)$ is zero everywhere. If it is not zero everywhere, then the algorithm checks if $p(0,0,...,x_n)$ is zero everywhere, and otherwise the algorithm checks if $p(1,...,x_n)$ is zero everywhere. This continues recursively until the algorithm finds the first element that is non-zero or rejects. Also, \cite{GGH18} provides a pseudo-deterministic proof for graph isomorphism where the verifier outputs the lexicographically first isomorphism by going recursively. The algorithm starts by figuring out where the first vertex is mapped in the lexicographically first isomorphism by looping through the vertices, then where the second vertex is mapped, and so on until the lexicographically first isomorphism has been found.  

We will use a structure inspired by this  to define doubly-efficient pseudo-deterministic proofs for a large class of problems studied within the fine grained complexity literature.

\end{section}

\begin{section}{Linear programming}

In \cite{GGH18}, we use the fact that graph non-isomorphism has an $\AM$ proof to give a pseudo-deterministic $\AM$ proof for graph isomorphism. Here we show a pseudo-deterministic proof for linear programming. Linear programming is the class of optimization problems with linear constraints and a linear objective function. 
We exploit the fact that linear programming admits a good characterization, a compact way of certifying the optimality of a solution. Indeed every linear program (say, where the objective is to maximize) has a corresponding dual linear program, a minimization problem, with the property that (i) (weak duality) any feasible solution to the dual provides an upper bound on the optimal primal value and (ii) (strong duality) there exists an optimal solution to the dual with the same value as the primal optimal solution.  Furthermore, there exist compact polynomial-sized solutions to the primal and dual linear programs. Therefore such a  polynomial-sized feasible solution to the dual with an equal value as a primal solution provides a compact certificate for the optimality of this primal solution. 

In order to be able to turn this into a pseudo-deterministic proof, we need the prover to identify a special, unique optimal solution (as there could be a continuum of primal optimal solutions), and provide a way for the verifier to efficiently verify it. As special solution, we use the {\it lexicographically greatest} optimal solution to the primal. Among all optimal solutions, the lexicographically greatest first maximizes $x_1$, then $x_2$, and so on; see below for a precise definition. To verify it, one option would be to provide dual optimal solutions to a squence of dual linear programs corresponding to the definition of lexicographically greatest maximal solution. A better (more efficient) way, which we describe in this section, is to show that we can perturb the objective function of the primal linear program in such a way that there is a unique optimal solution and that this solution is the unique lexicographically greatest optimal solution for the unperturbed linear program. 

We start with basic notation and linear programming fundamentals. 

\begin{definition}
A linear program is the problem $\max \{\mathbf{c}^{\top}\mathbf{x}\}$ subject to the constraints $A\mathbf{x} \leq \mathbf{b}$ and $\mathbf{x} \geq \mathbf{0}$. Its dual is the linear program $\min \{\mathbf{b}^{\top}\mathbf{y}\}$ subject to the constraints $A^\top \mathbf{y} \geq \mathbf{c}$ and $\mathbf{y} \geq \mathbf{0}$.
\end{definition}

\begin{theorem}
(Weak duality) If $\mathbf{x},\mathbf{y}$ are feasible solutions to a linear program given by $\max \{\mathbf{c}^{\top}\mathbf{x}\}$ subject to $A\mathbf{x} \leq \mathbf{b}$ and $\mathbf{x} \geq \mathbf{0}$ and its dual respectively, then $\mathbf{c}^{\top}\mathbf{x} \leq \mathbf{b}^{\top}\mathbf{y}$. (Strong duality) Furthermore $\mathbf{x},\mathbf{y}$ are optimal solutions if and only if $\mathbf{c}^{\top}\mathbf{x} = \mathbf{b}^{\top}\mathbf{y}$. 
\end{theorem}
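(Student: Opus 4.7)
The plan is to handle weak duality by a direct chain of inequalities, and then split strong duality into an easy direction that follows immediately from weak duality and a harder direction that requires a theorem-of-the-alternative argument.

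For weak duality, I would start from the two primal and dual feasibility conditions and use nonnegativity to chain them. Since $\mathbf{y} \ge \mathbf{0}$ and $A\mathbf{x} \le \mathbf{b}$, left-multiplying by $\mathbf{y}^\top$ preserves the inequality, giving $\mathbf{y}^\top A\mathbf{x} \le \mathbf{y}^\top \mathbf{b} = \mathbf{b}^\top \mathbf{y}$. Symmetrically, $\mathbf{x} \ge \mathbf{0}$ and $A^\top \mathbf{y} \ge \mathbf{c}$ give $\mathbf{x}^\top A^\top \mathbf{y} \ge \mathbf{x}^\top \mathbf{c} = \mathbf{c}^\top \mathbf{x}$. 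Combining and noting $\mathbf{x}^\top A^\top \mathbf{y} = \mathbf{y}^\top A\mathbf{x}$ yields $\mathbf{c}^\top \mathbf{x} \le \mathbf{b}^\top \mathbf{y}$. This is a short calculation, not the main obstacle.

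For strong duality, one direction is easy: if $\mathbf{x},\mathbf{y}$ are feasible with $\mathbf{c}^\top \mathbf{x} = \mathbf{b}^\top \mathbf{y}$, then for any other feasible primal solution $\mathbf{x}'$, weak duality applied to the pair $(\mathbf{x}',\mathbf{y})$ gives $\mathbf{c}^\top \mathbf{x}' \le \mathbf{b}^\top \mathbf{y} = \mathbf{c}^\top \mathbf{x}$, so $\mathbf{x}$ is primal optimal; the same reasoning shows $\mathbf{y}$ is dual optimal. The harder direction is the existence of a matching dual optimum when the primal is feasible and bounded with optimal value $z^\ast$. My plan is to invoke Farkas' lemma, which I would treat as a known consequence of the separating-hyperplane theorem (or derive quickly by Fourier--Motzkin elimination). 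Specifically, since $\{A\mathbf{x} \le \mathbf{b},\ \mathbf{x} \ge \mathbf{0},\ \mathbf{c}^\top \mathbf{x} \ge z^\ast + \varepsilon\}$ is infeasible for every $\varepsilon > 0$, the theorem of the alternative produces multipliers $\mathbf{y} \ge \mathbf{0}$ and $\lambda \ge 0$ with $A^\top \mathbf{y} \ge \lambda \mathbf{c}$ and $\mathbf{b}^\top \mathbf{y} < \lambda(z^\ast + \varepsilon)$. One must then argue $\lambda > 0$ (here using that the primal is feasible, so $\lambda = 0$ would contradict the existence of $\mathbf{x}$ with $A\mathbf{x} \le \mathbf{b}$, $\mathbf{x} \ge \mathbf{0}$), after which scaling $\mathbf{y}$ by $1/\lambda$ produces a dual-feasible point whose value is within $\varepsilon$ of $z^\ast$. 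Taking $\varepsilon \to 0$ and using closedness of the dual feasible region (or directly bounding the dual optimum by $z^\ast$ via weak duality) yields a dual feasible $\mathbf{y}^\ast$ with $\mathbf{b}^\top \mathbf{y}^\ast = z^\ast$, which is both optimal for the dual and matches the primal optimum.

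The main obstacle is the necessity direction of strong duality: everything else is essentially bookkeeping, but producing the dual certificate genuinely requires a convex-geometric fact (Farkas, or equivalently the separating-hyperplane theorem for closed convex cones). I would therefore organize the write-up so that Farkas' lemma is quoted as a standard black-box, and the work in the proof proper is the reduction from the LP optimum to an infeasible inequality system that Farkas can dispatch, together with the $\lambda > 0$ argument.
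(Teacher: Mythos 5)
The paper states this theorem without proof, quoting LP duality as standard background, so there is no internal argument to compare against; your write-up supplies a textbook Farkas-based proof. Your weak-duality chain is correct, the ``equal values imply both optimal'' direction follows cleanly from it, and the reduction for the hard direction is the right one: the system $\{A\mathbf{x}\le\mathbf{b},\ \mathbf{x}\ge\mathbf{0},\ \mathbf{c}^{\top}\mathbf{x}\ge z^{\ast}+\varepsilon\}$ is infeasible, Farkas yields $\mathbf{y}\ge\mathbf{0}$ and $\lambda\ge 0$ with $A^{\top}\mathbf{y}\ge\lambda\mathbf{c}$ and $\mathbf{b}^{\top}\mathbf{y}<\lambda(z^{\ast}+\varepsilon)$, and primal feasibility forces $\lambda>0$, exactly as you argue.

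The one imprecise step is the concluding sentence. ``Closedness of the dual feasible region'' does not by itself produce a limit point as $\varepsilon\to 0$: the dual polyhedron can be unbounded, and the near-optimal points $\mathbf{y}_{\varepsilon}$ need not stay in any compact set, so closedness alone does not yield attainment. And the parenthetical ``bounding the dual optimum by $z^{\ast}$ via weak duality'' is stated in the wrong direction -- weak duality gives the lower bound $\mathbf{b}^{\top}\mathbf{y}\ge z^{\ast}$; the matching upper bound is what the $\mathbf{y}_{\varepsilon}$ you constructed provide. Fortunately you do not actually need attainment for the statement as written, since it already presupposes that a dual optimal $\mathbf{y}$ exists: your Farkas argument plus weak duality shows the dual infimum is exactly $z^{\ast}$, hence any dual optimal $\mathbf{y}$ satisfies $\mathbf{b}^{\top}\mathbf{y}=z^{\ast}=\mathbf{c}^{\top}\mathbf{x}$, which is all the ``only if'' direction needs. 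Replacing the closedness remark with that observation (or, if you want genuine attainment, invoking the fundamental theorem of LP for a bounded-below LP over a nonempty polyhedron) would make the proof airtight.
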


Furthermore, there exist optimal solutions of polynomial size, since any extreme point (which cannot be expressed as a strict convex combination of feasible points) has this property. 

\begin{theorem}[\cite{GoemansNotes}] \label{thsize}
Let $P$ be the linear program given by $\max \mathbf{c}^{\top}\mathbf{x}$ subject to $A\mathbf{x} \leq \mathbf{b}$, where all inputs are integers and $A$ is an $m \times n$ matrix. Define $L = m + n + \log (\max_{A'} |det(A')|) + \log(\max_{i} |b_i|) + \log(\max_{j} |c_j|)$, where $A'$ range over all square submatrices of $A$. Then any extreme point $\mathbf{x}$ of $P$ is of the form $x_i=\frac{p_i}{q}$ where $q$ and $p_i$'s are integers satisfying $1\leq q <2^L$ and $0\leq p_i<2^L$ for all $i$.
\end{theorem}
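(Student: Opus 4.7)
The plan is to use the standard characterization of extreme points of $P = \{\mathbf{x} : A\mathbf{x} \leq \mathbf{b}\}$ as basic feasible solutions: any extreme point $\mathbf{x}^*$ is determined by $n$ linearly independent tight constraints, so there exists a set $B$ of $n$ rows for which the corresponding submatrix $A_B$ is non-singular and $A_B \mathbf{x}^* = \mathbf{b}_B$. This reduces the problem from an analysis of the polyhedron to bounding the size of the unique solution of a square integer linear system, which in turn suggests Cramer's rule as the main tool.

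The next step is to apply Cramer's rule, yielding $x_i^* = \det(A_B^{(i)})/\det(A_B)$, where $A_B^{(i)}$ is $A_B$ with its $i$-th column replaced by $\mathbf{b}_B$. Since all entries of $A$ and $\mathbf{b}$ are integers, both the numerator and denominator are integers. Setting $q = |\det(A_B)|$ and $p_i = |\det(A_B^{(i)})|$ (with signs absorbed into the representation), the denominator satisfies $1 \leq q \leq \max_{A'} |\det(A')| < 2^L$: the lower bound is forced by non-singularity together with integrality, and the upper bound is immediate since $\log(\max_{A'} |\det(A')|)$ appears as one of the positive summands in $L$. For the numerator, I would expand $\det(A_B^{(i)})$ by cofactors along the replaced $i$-th column, giving $|\det(A_B^{(i)})| \leq \sum_{j=1}^{n} |b_{B_j}| \cdot |\det(M_{ji})|$ where each $M_{ji}$ is an $(n-1) \times (n-1)$ submatrix of $A$. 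Hence $p_i \leq n \cdot \max_k |b_k| \cdot \max_{A'} |\det(A')|$, and taking logarithms (using $\log n \leq n$) bounds the right-hand side by $L$.

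The main bookkeeping obstacle is that the implicit non-negativity constraints $\mathbf{x} \geq \mathbf{0}$ from the LP definition must sometimes be counted among the tight rows, so $A_B$ should really be viewed as a non-singular submatrix of the augmented system $[A;\, -I]$ with right-hand side $[\mathbf{b};\, \mathbf{0}]$. The rows contributed by $-I$ contain only $0$ and $\pm 1$ entries and so do not inflate any of the subdeterminant bounds above, while the zeros they introduce into the right-hand side only help the numerator bound. The additional terms $m$ and $\log(\max_j |c_j|)$ in $L$ are not strictly needed for this extreme-point size bound; they appear only because $L$ is the standard convenient encoding-size parameter of a linear program and provide slack that absorbs minor losses in the bookkeeping described above.
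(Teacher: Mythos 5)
The paper does not give its own proof of this statement; it is cited to \cite{GoemansNotes}, and your proof is exactly the standard argument those notes use (extreme point $=$ basic feasible solution, Cramer's rule, subdeterminant bounds). Your bookkeeping checks out: $q=|\det(A_B)|$ is a nonzero integer at most $\max_{A'}|\det(A')|<2^L$, and cofactor expansion along the replaced column gives $p_i\leq n\cdot\max_k|b_k|\cdot\max_{A'}|\det(A')|$, whose logarithm is below $L$ because $\log n<n$ and the remaining $m$ and $\log\max_j|c_j|$ terms only add slack; your observation that the $\pm 1$ rows contributed by implicit non-negativity constraints do not inflate any subdeterminant (Laplace expansion along those rows reduces them, up to sign, to subdeterminants of $A$) correctly reconciles the paper's two slightly different specifications of the LP, with and without $\mathbf{x}\geq\mathbf{0}$. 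One small nit: the theorem asserts $0\leq p_i$, which presumes $x_i\geq 0$; if the feasible region really is only $\{\mathbf{x}: A\mathbf{x}\leq\mathbf{b}\}$ the conclusion should read $|x_i|=p_i/q$, which is what your sign-absorbing choice $p_i=|\det(A_B^{(i)})|$ actually establishes---but that is an imprecision in the theorem statement, not a gap in your proof.
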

This quantity $L$ is often used when referring to efficiency of linear programming algorithms, and can be seen (see \cite{GoemansNotes}) to be polynomially related to the binary encoding of all the input data.  

\begin{definition}
The {\it lexicographically greatest} optimal solution $\mathbf{x^*}$ to a linear program  $\max \{\mathbf{c}^{\top}\mathbf{x}\}$ subject to $A\mathbf{x} \leq \mathbf{b}$ and $\mathbf{x} \geq \mathbf{0}$ is the solution that satisfies (i) feasibility: $A\mathbf{x^*} \leq \mathbf{b}$ and $\mathbf{x^*} \geq \mathbf{0}$, (ii) optimality: $\mathbf{c}^{\top}\mathbf{x^*} = \max_{A\mathbf{x} \leq \mathbf{b},\mathbf{x} \geq \mathbf{0}} \{\mathbf{c}^{\top}\mathbf{x}\} $, and (iii) for every $\mathbf{x}\in \arg\max_{A\mathbf{x} \leq \mathbf{b},\mathbf{x} \geq \mathbf{0}} \{\mathbf{c}^{\top}\mathbf{x}\}$, either $\mathbf{x} = \mathbf{x^*}$ or there exists $i\leq n$ with $x_i<x^*_i$ and $x_j=x^*_j$ for $j<i$. 
\end{definition}

Now that we have defined the necessary terminology, we can proceed to proving that linear programming has a pseudo-deterministic interactive proof. To do so, we perturb our linear program so that the only optimal solution to the new linear program is the lexicographically greatest solution to the original program, and then use the dual linear program to prove that the solution given to the verifier is optimal. 

\begin{theorem}
\label{LP}
Let $P$ be the linear program given by $\max \mathbf{c}^{\top}\mathbf{x}$ subject to $A\mathbf{x} \leq \mathbf{b}$, with $L$ defined as above. Then, the linear program $P'$ given by $\max \mathbf{c}^{\top}\mathbf{x} + \epsilon x_1 + \epsilon^2 x_2 + ... + \epsilon^n x_n$, where $\epsilon = 2^{-3L-2}$, has a unique solution which is the lexicographically greatest solution of $P$. 
\end{theorem}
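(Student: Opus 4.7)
The plan is to compare extreme points of the common feasible region of $P$ and $P'$ in two stages: first show that any extreme point that maximizes the perturbed objective is already an optimal solution to $P$, and then show that among all primal-optimal extreme points the perturbation $\sum_{i=1}^n \epsilon^i x_i$ is strictly maximized by the lexicographically greatest one. Since any LP with a finite optimum attains it at an extreme point (and the lex-greatest primal optimum is itself an extreme point, obtained by iteratively maximizing $x_1, x_2, \ldots$ over the optimal face), it suffices to reason only about extreme points. The central quantitative input is Theorem~\ref{thsize}: every extreme point has coordinates $x_i = p_i/q$ with $0 \leq p_i < 2^L$ and $1 \leq q < 2^L$, so $0 \leq x_i < 2^L$, and any two distinct rationals with denominators below $2^L$ differ by at least $2^{-2L}$.

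For the first stage, I would upper-bound the perturbation gap between any two extreme points $\mathbf{x}, \mathbf{x}'$ by
\[
\left| \sum_{i=1}^n \epsilon^i (x_i - x'_i) \right| \;\leq\; 2^L \sum_{i=1}^n \epsilon^i \;<\; \frac{2^L \epsilon}{1-\epsilon} \;\leq\; 2 \cdot 2^L \cdot \epsilon \;=\; 2^{-2L-1}.
\]
On the other hand, $\mathbf{c}^{\top}\mathbf{x}$ is a rational with denominator at most $2^L$, so two distinct values of $\mathbf{c}^{\top}\mathbf{x}$ on extreme points differ by at least $2^{-2L}$. Hence whenever $\mathbf{c}^{\top}\mathbf{x} > \mathbf{c}^{\top}\mathbf{x}'$, the perturbed objective strictly prefers $\mathbf{x}$, and therefore every extreme-point optimum of $P'$ is also primal-optimal for $P$.

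For the second stage, I would take two primal-optimal extreme points with $\mathbf{x}$ lexicographically greater than $\mathbf{x}'$, i.e., an index $i$ with $x_j = x'_j$ for $j < i$ and $x_i > x'_i$, and split
\[
\sum_{j=1}^n \epsilon^j (x_j - x'_j) \;=\; \epsilon^i (x_i - x'_i) \;+\; \sum_{j > i} \epsilon^j (x_j - x'_j).
\]
The leading term is at least $\epsilon^i \cdot 2^{-2L}$, again because $x_i$ and $x'_i$ are distinct rationals with denominators below $2^L$. Applying the same geometric bound as before, but now starting at index $i+1$, the tail is bounded in absolute value by $2 \cdot 2^L \cdot \epsilon^{i+1} = \epsilon^i \cdot 2^{-2L-1}$, which is exactly half the leading term. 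Hence the perturbed objective is strictly larger at $\mathbf{x}$, so the lex-greatest primal optimum is the unique extreme-point maximizer of $P'$. Uniqueness of the overall maximizer follows since any optimum of $P'$ is a convex combination of optimal extreme points, and here there is only one.

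The main obstacle is calibrating the constant in $\epsilon = 2^{-3L-2}$ so that both implications have strict inequality simultaneously: $\epsilon$ must be small enough that the entire perturbation of any extreme point stays below the $2^{-2L}$ gap in $\mathbf{c}^{\top}\mathbf{x}$-values, while each $\epsilon^i$ must still dominate the full geometric tail $\sum_{j > i} \epsilon^j$ multiplied by the coordinate range $2^L$. The exponent $3L+2$ is tuned to grant both inequalities a factor-of-two margin, and it is precisely this margin that forces strictness and hence uniqueness.
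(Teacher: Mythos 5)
Your proposal is correct and follows essentially the same approach as the paper: both proofs compare pairs of extreme points, use the $q < 2^L$ denominator bound from Theorem~\ref{thsize} to get the $2^{-2L}$ separation between distinct objective or coordinate values, bound the perturbation tail by a geometric series $2^L\epsilon/(1-\epsilon)$, and calibrate $\epsilon = 2^{-3L-2}$ to make both comparisons strict. The only difference is presentational --- you frame it as two stages (perturbed optima are primal optima, then lex-greatest wins among them) versus the paper's two-case analysis on a pair of extreme points --- and you are slightly more explicit that the optimum of $P'$ is attained at, and only at, an extreme point; otherwise the arguments and constants are identical.
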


\begin{proof}
First consider the unperturbed linear program $P$, and two extreme point solutions $\mathbf{x}^{(1)}$ and $\mathbf{x}^{(2)}$, with corresponding denominators $q_1$ and $q_2$ respectively (see Theorem 
\ref{thsize}).

If $\mathbf{c}^{\top}\mathbf{x}^{(1)}>\mathbf{c}^{\top}\mathbf{x}^{(2)}$ then $\mathbf{c}^{\top}\mathbf{x}^{(1)}-\mathbf{c}^{\top}\mathbf{x}^{(2)} \geq \frac{1}{q_1q_2} >2^{-2L}$. Let $\mathbf{c'}$ be the perturbed $\mathbf{c}$ (by adding the vector $(\epsilon, \epsilon^2,\cdots,\epsilon^n)$). Then 
$$\mathbf{c'}^{\top}\mathbf{x}^{(1)}-\mathbf{c'}^{\top}\mathbf{x}^{(2)} > 2^{-2L} +\sum_{i=1}^n \epsilon^i (x^{(1)}_i-x^{(2)}_i) > 2^{-2L} - 2^L \sum_{i=1}^n \epsilon^i > 2^{-2L} - 2^L \epsilon/(1-\epsilon)>0, 
$$ given our choice of $\epsilon$. This shows that, after perturbation, we still have that $\mathbf{x}^{(1)}$ has a greater objective value than $\mathbf{x}^{(2)}$. 

Suppose, on the other hand, that $\mathbf{c}^{\top}\mathbf{x}^{(1)}=\mathbf{c}^{\top}\mathbf{x}^{(2)}$ and that $\mathbf{x}^{(1)}$ is lexicographically greater than $\mathbf{x}^{(2)}$, i.e. that $x_i^1>x_i^2$ while $x_j^1=x_j^2$ for $j<i$. Then 
$$\mathbf{c'}^{\top}\mathbf{x}^{(1)}-\mathbf{c'}^{\top}\mathbf{x}^{(2)} = \sum_{k=i}^n \epsilon^k (x_k^1-x_k^2) \geq \epsilon^i \left(\frac{1}{q_1q_2} -\sum_{\ell=1}^{n-i} \epsilon^{\ell} 2^L\right) >\epsilon^i \left(2^{-2L} -\frac{\epsilon}{1-\epsilon} 2^L\right)>0,
$$ showing that, after perturbation, the lexicographically greater solution $\mathbf{x}^{(1)}$ has greater (perturbed) objective function value. Together, this shows that the unique optimal solution to the perturbed problem is the lexicographically greatest solution to $P$.
\end{proof}

Observe that the parameter $L'$ of the perturbed linear program increases polynomially to $O(nL)$, but the precision needed to solve the linear program approximately in order to be able to recover the unique extreme point solution is still $2^{-O(L)}$ , as this represents a lower bound on the difference in value between any two extreme point solutions. 

\begin{theorem}
There exists a $(O(n^2 \log(1/\delta)),\tilde{O}(n^{\omega} \log(1/\delta)))$ pseudo-deterministic interactive proof for finding an optimal solution to a linear program $P$. 
\end{theorem}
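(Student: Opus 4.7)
The plan is to leverage Theorem \ref{LP}: the prover will produce a certified optimal solution to the perturbed linear program $P'$, which by that theorem coincides with the canonical (lex-greatest) optimum of $P$, and strong duality furnishes the compact certificate that the verifier can check.

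\textbf{Protocol.} The prover computes the perturbed objective $\mathbf{c}' = \mathbf{c} + (\epsilon, \epsilon^2, \ldots, \epsilon^n)$ with $\epsilon = 2^{-3L-2}$ and runs the randomized LP algorithm of \cite{Cohen18} on $P'$ to obtain approximate primal and dual optimal solutions in time $\tilde{O}(n^{\omega}\log(1/\delta))$. It rounds the approximate primal solution to its nearest rational extreme point; by Theorem \ref{LP} this is the unique lex-greatest optimum $\mathbf{x}^*$ of $P$. An optimal dual solution $\mathbf{y}^*$ for $P'$ is obtained similarly. The pair $(\mathbf{x}^*, \mathbf{y}^*)$ is sent to the verifier, who checks (i) primal feasibility $A\mathbf{x}^* \leq \mathbf{b}$ and $\mathbf{x}^* \geq \mathbf{0}$, (ii) dual feasibility $A^{\top}\mathbf{y}^* \geq \mathbf{c}'$ and $\mathbf{y}^* \geq \mathbf{0}$, and (iii) the duality equality $\mathbf{c}'^{\top}\mathbf{x}^* = \mathbf{b}^{\top}\mathbf{y}^*$. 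These amount to two matrix-vector products plus two inner products, i.e.\ $O(n^2)$ arithmetic operations.

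\textbf{Soundness and canonicity.} By strong duality, checks (i)--(iii) together force $\mathbf{x}^*$ to be optimal for $P'$. Theorem \ref{LP} asserts that this optimum is unique and equals the lex-greatest optimum of $P$, so no dishonest prover can convince the verifier to accept any other vector; this gives canonical soundness. If $P$ is infeasible, no pair $(\mathbf{x}^*,\mathbf{y}^*)$ satisfies (i) and (ii), and the verifier rejects, giving standard soundness. Canonical completeness follows from the correctness of \cite{Cohen18} together with the rounding step.

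\textbf{Main obstacle.} The delicate point is precision bookkeeping. The perturbed program $P'$ has encoding length $L' = O(nL)$, yet we want each of the verifier's $O(n^2)$ arithmetic operations to cost only $O(\log(1/\delta))$ with $\log(1/\delta) = O(L)$. The remark following Theorem \ref{LP} provides exactly what is needed: the gap in objective value between any two distinct extreme points of $P'$ is still $\Omega(2^{-O(L)})$, so $\delta = 2^{-\Theta(L)}$ is enough both for the prover to round uniquely to $\mathbf{x}^*$ and for the verifier to resolve the equality test in (iii). Concretely, the verifier evaluates $\mathbf{c}'^{\top}\mathbf{x}^*$ as $\mathbf{c}^{\top}\mathbf{x}^* + \sum_i \epsilon^i x_i^*$ and compares against $\mathbf{b}^{\top}\mathbf{y}^*$ at precision $\delta$. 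Verifying carefully that this rounding succeeds and that the finite-precision comparison in (iii) is decisive, while keeping the per-operation cost at $O(\log(1/\delta))$ rather than $O(\log(1/\delta') )= O(nL)$, is the main technical step; the resulting verifier time is $O(n^2\log(1/\delta))$ and prover time $\tilde{O}(n^{\omega}\log(1/\delta))$.
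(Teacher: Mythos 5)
Your protocol is essentially the paper's own: perturb the objective as in Theorem \ref{LP} so the lex-greatest optimum becomes the unique optimum of $P'$, have the prover send optimal primal and dual solutions to $P'$, and have the verifier check feasibility of both and equality of the two objective values. The paper's proof is in fact terser than yours --- it does not spell out the rounding-to-extreme-point step or the feasibility checks explicitly --- and it relegates the precision issue to the informal remark just before the theorem statement. The concern you flag as the ``main obstacle'' (that $\mathbf{c}'$ has $\Theta(nL)$-bit entries, so naively evaluating $\mathbf{c}'^{\top}\mathbf{x}^*$ costs more than $O(\log(1/\delta))$ per operation) is a genuine subtlety that the paper also does not resolve rigorously, so your proof is no less complete than the paper's; still, if you want a clean argument you should explain concretely how the verifier compares perturbed objective values without paying the $O(nL)$ bit-complexity, rather than just asserting that it is possible.
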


Linear programs with at most $n$ variables and constraints can be solved within an error of $\delta$ in time $\tilde{O}(n^{2.5}\log(1/\delta))$ deterministically \cite{Vaidya89}, and in time $\tilde{O}(n^{\omega}\log(1/\delta))$ randomized \cite{Cohen18} with $\omega$ (currently $\sim 2.37$) corresponds to the exponent for fast matrix multiplication.  The notation $\tilde{O}$ hides polylog factors. The time to verify a pair of primal and dual optimal solution is only $O(d^2)$ (with a $\log(1/\delta)$ factor for bit complexity) as this only requires matrix vector multiplication. So, verification is currently more efficient than finding the solution.

\begin{proof}
By Theorem \ref{LP}, we can perturb the objective function of $P$ and obtain a linear program $P'$ which has a unique solution, namely the lexicographically greatest solution of $P$. Let $Q'$ be the dual linear program to $P'$. The prover sends over optimal solutions to $P'$ and $Q'$. Then, the verifier checks to see whether the solutions are feasible and also whether the value of the solution to $P'$ is equal to the value of the solution to $Q'$. If both of these conditions hold, the verifier outputs the solution to $P'$, otherwise it outputs $\bot$. If the prover is honest, then clearly the verifier will output the solution to $P'$. A cheating prover cannot make the verifier output a different solution to $P$, as this would not correspond to an optimal solution of $P'$ since it is unique. 
\end{proof}

\end{section}

\begin{section}{Problems studied in fine-grained complexity}

\begin{subsection}{Orthogonal Vectors and $k$-Clique}

The Orthogonal Vectors problem is the problem of finding a pair of orthogonal vectors, given a list of $n$ vectors of length $d$. The Orthogonal Vectors conjecture is that this problem cannot be solved in time $O(n^{2 - \epsilon} 2^{o(d)})$, and is the basis for a large number of results in fine-grained complexity. In \cite{W16}, Williams gives an MA proof for counting the number of orthogonal vectors where the verifier takes time $\tilde{O}(nd)$. We will show that the prover in this proof can run in time $\tilde{O}(n^2d^2)$. 

\begin{theorem}[Theorem A.1 of \cite{W16}]
Let $d \leq n$. Then, for every $V \subseteq \{0,1\}^d$ such that $|V| = n$, there is an $\MA$-proof system certifying for every $v \in V$ the number of vectors $u \in V$ such that $\langle u,v \rangle = 0$, with verifier running time $\tilde{O}(nd)$ and error $1/poly(n)$.
\end{theorem}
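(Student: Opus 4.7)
The plan is to use the polynomial method in the form of Williams' ``curve sampling'' trick: encode the $n$ counts as evaluations of a single low-degree univariate polynomial that Merlin sends, and have Arthur verify this polynomial at one random point. I would first arithmetize by picking a prime $q = \mathrm{poly}(n)$ and defining
\[
p(v) \;=\; \sum_{u \in V} \prod_{i=1}^{d}(1 - u_i v_i)
\]
over $\F_q$. For $v \in \{0,1\}^d$ each factor is the $\{0,1\}$-indicator that $u_i v_i = 0$, so the product equals the indicator of $\langle u, v\rangle = 0$ and $p(v) = c_v$ as integers; the identity also holds over $\F_q$ once $c_v \le n < q$. As a polynomial in $v$, $p$ is multilinear of total degree $d$, and a single evaluation at any point $v^* \in \F_q^d$ costs $\tilde O(nd)$.

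Next, enumerate $V = \{v_1,\dots,v_n\}$ and build a degree-$(n-1)$ curve $\gamma\colon \F_q \to \F_q^d$ by coordinatewise Lagrange interpolation so that $\gamma(i) = v_i$ for all $i \in [n]$. Let $Q(t) := p(\gamma(t))$, a univariate polynomial of degree at most $d(n-1)$, with $Q(i) = c_{v_i}$. The MA protocol is: Merlin sends $Q$ explicitly (as $\tilde O(nd)$ field elements); Arthur reads $Q$, sets $c_{v_i} := Q(i)$ for each $i$, samples a uniform $t^* \in \F_q$, computes $\gamma(t^*)$, evaluates $p(\gamma(t^*))$ directly from its definition, evaluates $Q(t^*)$ from Merlin's message, and accepts iff the two values agree.

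For the analysis, completeness is immediate. For soundness, any cheating $Q' \ne Q$ differs from the honest $Q$ on a nonzero polynomial of degree at most $d(n-1)$, so by Schwartz--Zippel a uniformly random $t^* \in \F_q$ exposes the discrepancy except with probability $d(n-1)/q = 1/\mathrm{poly}(n)$; in particular any false report $Q(i) \ne c_{v_i}$ is caught because $Q = p \circ \gamma$ is uniquely determined. Each of Arthur's subtasks runs in $\tilde O(nd)$: Lagrange evaluation of $\gamma(t^*)$ from its $n$ data points (sharing prefix/suffix products of $(t^*-k)$ across the $d$ coordinates), direct evaluation of $p(\gamma(t^*))$ as $n$ products of length $d$, and Horner evaluation of the degree-$\tilde O(nd)$ polynomial $Q$ at $t^*$.

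The main obstacle I expect is bounding Merlin's running time. The natural approach is to evaluate $Q$ at $d(n-1)+1$ interpolation nodes, each node costing $\tilde O(nd)$ (one call to $\gamma$ followed by one call to $p \circ \gamma$), for a total of $\tilde O(n^2 d^2)$. Pushing below this would require batching the $\tilde O(nd)$ curve evaluations---e.g., FFT-based multipoint evaluation of each of the $d$ coordinate polynomials of $\gamma$, then amortizing the products $\prod_i(1 - u_i \gamma_i(t_k))$ across nodes---while ensuring that the bit complexity of arithmetic in $\F_q$ contributes only polylog factors absorbed into the $\tilde O$.
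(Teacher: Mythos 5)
Your proof is correct and follows essentially the same approach as Williams' (which the paper sketches): arithmetize the count as a degree-$d$ polynomial, interpolate a degree-$(n-1)$ curve $\gamma$ through the input vectors, have Merlin send the coefficients of the degree-$O(nd)$ composition $Q = p\circ\gamma$, and have Arthur spot-check $Q$ at a random point and then read off all $n$ counts by fast multi-point evaluation. The only cosmetic difference is that you work over a single prime field $\F_q$ with $q=\mathrm{poly}(n)$ whereas Williams works over an extension $\F_{p^l}$ of a smaller prime $p > n^2 d$; both choices give $\tilde{O}(nd)$ verifier time and $1/\mathrm{poly}(n)$ soundness error, so nothing substantive is gained or lost.
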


We will give a sketch of the $\MA$ proof of \cite{W16}, where we will set the error probability to $\epsilon = 1/poly(n)$. We want to find two vectors in $V$ that are orthogonal. The proof relies on a circuit $C$ that takes a vector in $\{0,1\}^d$ and computes the number of vectors in $V$ orthogonal to it. This circuit has size $O(nd)$ and degree at most $2d$. The goal of the interactive proof is to evaluate $C$ over $\F_p$, where $p$ is a prime greater than $n^2 \cdot d$, on all of the vectors in $V$. This gives the number of pairs of vectors orthogonal to each other. In order to evaluate $C$ on many points, the proof transforms $C$ into a univariate polynomial over $\F_{p^l}$, where $l$ is the smallest integer such that $p^l > n(2d)/\epsilon$. To do so, we associate the vectors $v^1,...,v^n$ with elements of $\F_{p^l}$ $a_1,...,a_n$ in a natural way, and then consider the degree $n$ polynomial $\psi_i$ with $\psi_i(a_j) = v_i^j$, the $i$th coordinate of the $j$th vector, which can be found via polynomial interpolation. The univariate polynomial that we evaluate is then $C(\psi_1(x),\psi_2(x),...,\psi_d(x))$. The circuit for this polynomial can be constructed in $\tilde{O}(nd)$ time. The prover sends the coefficients of this polynomial and then the verifier checks that these coefficients are correct by evaluating the polynomial on a random point. (The prover also has to send an irreducible polynomial of degree $l$ over $\F_p$ to the verifier, but we will not go into that here.) Since the degree of the polynomial is at most $n(2d)$, the probability that this check fails is at most $\epsilon$. Then, the verifier can compute the value of the polynomial on $a_1,...,a_n$ using the coefficients in time $\tilde{O}(nd)$ by fast multi-point evaluation. 

\begin{theorem}
Orthogonal Vectors has a $(\tilde{O}(nd), \tilde{O}(n^2d^2))$ pseudo-deterministic $\MA$ proof. 
\end{theorem}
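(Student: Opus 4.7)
The plan is to combine the standard pseudo-deterministic template of locating the lexicographically smallest solution with the Williams $\MA$ certificate of Theorem A.1, which already delivers everything we need about counts of orthogonal vectors. The canonical solution for the input $V = \{v_1, \ldots, v_n\}$ is the lexicographically smallest pair $(i^*, j^*)$ with $i^* < j^*$ and $\langle v_{i^*}, v_{j^*} \rangle = 0$, and the verifier should output $\bot$ exactly when no such pair exists.

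First I would have the prover identify $(i^*, j^*)$ by brute force in $O(n^2 d)$ time and, in parallel, construct the Williams $\MA$ certificate of Theorem A.1 that certifies for every $v_k \in V$ the count $c_k = |\{u \in V : \langle u, v_k \rangle = 0\}|$. The prover's single message consists of $(i^*, j^*)$ (or the symbol $\bot$) together with this certificate. The verifier then performs three checks: (a) it computes $\langle v_{i^*}, v_{j^*}\rangle$ in $O(d)$ and confirms it is $0$; (b) it directly computes $\langle v_{i^*}, v_j \rangle$ for each index $j$ with $i^* < j < j^*$, at total cost $O((j^* - i^*) d) \le O(nd)$, confirming none of these inner products is zero; (c) it runs the Williams $\MA$ verifier in $\tilde{O}(nd)$ to recover all the values $c_k$ and verifies that $c_k = 0$ for every $k < i^*$. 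When the prover claims $\bot$, only (c) is run and the verifier insists that every $c_k$ equals $0$.

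For soundness, if check (c) passes then with probability $1 - 1/\mathrm{poly}(n)$ every $c_k$ is correct. The prime $p$ used in Williams's protocol can be chosen with $p > d$ so that $\langle v_k, v_k \rangle \not\equiv 0 \pmod{p}$ whenever $v_k \ne 0$; hence $c_k = 0$ forces both $v_k \ne 0$ and $v_k$ to have no orthogonal partner in $V$. Consequently, $c_k = 0$ for all $k < i^*$ rules out every orthogonal pair whose smaller index is below $i^*$, and checks (a), (b) pin down $j^*$ as the correct second index. Completeness is immediate from the definition of the honest prover, and in the no-solution case the verifier correctly outputs $\bot$ precisely when the certified counts are all zero.

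For running time, the dominant cost on the prover side is the construction of the Williams certificate. The univariate polynomial $C(\psi_1(x), \ldots, \psi_d(x))$ has degree $O(nd)$, so its $O(nd)$ coefficients can be obtained by evaluating at $O(nd)$ points of $\F_{p^l}$ followed by interpolation; each evaluation costs $O(nd)$ field operations (Horner on each $\psi_i$, plus one pass over the circuit of size $O(nd)$), yielding $\tilde{O}(n^2 d^2)$ after absorbing the polylogarithmic overhead of arithmetic in $\F_{p^l}$. The main obstacle is this prover-side runtime accounting, since Williams states his theorem without an explicit prover-time bound; I would handle it by inspecting the interpolation step and confirming it does not dominate, so that the overall prover time matches the brute-force bound $\tilde{O}(n^2 d^2)$ claimed in the theorem.
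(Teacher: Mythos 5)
Your proposal is correct and follows essentially the same route as the paper's proof: prover sends the lexicographically first orthogonal pair together with the Williams $\MA$ certificate, the verifier recovers the orthogonality counts $c_k$ in $\tilde{O}(nd)$, uses $c_k = 0$ for all $k < i^*$ to rule out any smaller first index, directly checks inner products $\langle v_{i^*}, v_j\rangle$ for $i^* < j < j^*$ to rule out a smaller second index, and the prover time is dominated by interpolating the degree-$O(nd)$ univariate polynomial at $O(nd)$ points at $O(nd)$ cost each, giving $\tilde{O}(n^2d^2)$. Your explicit handling of self-orthogonality (via $p > d$) and of the $\bot$ case are careful refinements the paper leaves implicit, but the decomposition and accounting are the same.
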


\begin{proof}
We will give an $\MA$ proof that a pair of orthogonal vectors is the lexicographically first pair. The pseudo-determinisic $\MA$ proof will consist of the following steps: \begin{enumerate}
    \item The prover sends the pair of orthogonal vectors $(u,v)$ that the prover claims is the lexicographically first pair and also the coefficients of the polynomial described above. 
    \item The verifier checks that the pair of vectors $(u,v)$. is orthogonal. If the pair is not orthogonal, the verifier outputs $\bot$.  
    \item The verifier computes the number of vectors orthogonal to each vector as in the proof above. \item The verifier makes sure that there are no vectors before $u$ orthogonal to any of the other vectors by looking at the value of the polynomial at those vectors, and also that there are no vectors before $v$ that $u$ is orthogonal to by checking directly. If both of these statements hold, the verifier outputs $u,v$ otherwise it outputs $\bot$. 
    \end{enumerate} 
    This gives an $\MA$ proof that the pair of orthogonal vectors given is the lexicographically first pair. To see that the verifier runs in time $\tilde{O}(nd)$, observe that checking that a pair of vectors is orthogonal takes time $\tilde{O}(d)$, computing the polynomial as above takes time $\tilde{O}(nd)$, and checking that there are no vectors before $u$ orthogonal to any of the other vectors and that there are no vectors before $v$ that $u$ is orthogonal to takes time $\tilde{O}(nd)$, as the first check is done by looking at the values of the polynomial on vectors before $u$ and the second check is done by directly computing the inner product of $u$ with all vectors before $v$. It remains to check that the prover can run in time $\tilde{O}(n^2d^2)$. The prover has to find the lexicographically first pair of orthogonal vectors, which can be done in time $\tilde{O}(n^2d)$ using brute force, and compute the coefficients of $C(\psi_1(x),\psi_2(x),...,\psi_d(x))$, a polynomial of degree at most $nd$. This circuit has size $\tilde{O}(nd)$, so computing the coefficients using interpolation takes time $\tilde{O}(n^2d^2)$. 
\end{proof}

The $k$-Clique problem is, given a graph $G$, to find a clique of size $k$ in $G$. \cite{W16} gives an $\MA$ proof to certify the number of $k$-cliques where the verifier takes time $\tilde{O}(n^{\lfloor k/2 \rfloor + 2})$ using the same technique as for Orthogonal Vectors. In this case the polynomial $P$ used is the sum over all $\lfloor k/2 \rfloor$-cliques of the $\lceil k/2 \rceil$th elementary symmetric polynomial of the joint neighborhood (i.e. the set of vertices that are adjacent to every vertex in the $\lfloor k/2 \rfloor$-clique) of the $\lfloor k/2 \rfloor$-clique. This polynomial has degree $O(n)$ and size $O(n^2 \binom{n}{\lfloor k/2 \rfloor})$, and it needs to be evaluated on $\binom{n}{\lceil k/2 \rceil}$ points to compute the number of $k$-cliques. By a similar argument to the Orthogonal Vectors polynomial, the prover in the $\MA$ proof can be made to run in time $\tilde{O}(n^k)$. 

\begin{theorem}
$k$-Clique has a $(\tilde{O}(n^{\lfloor k/2 \rfloor + 2}k), \tilde{O}(n^k))$ pseudo-deterministic $\MA$ proof.
\end{theorem}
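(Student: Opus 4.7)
The plan is to adapt the Orthogonal Vectors construction to $k$-Clique, substituting Williams's $k$-clique-counting polynomial $P$ (the sum over $\lfloor k/2\rfloor$-cliques of the $\lceil k/2\rceil$-th elementary symmetric polynomial on the joint neighborhood) in place of the orthogonality circuit. First, the prover finds the lexicographically first $k$-clique $C=(v_1,\ldots,v_k)$ under some fixed vertex ordering by brute-force enumeration in $\tilde{O}(n^k)$ time, then computes the coefficients of the univariate polynomial obtained from $P$ via the same interpolation trick as in the Orthogonal Vectors proof: associating each evaluation point of $P$ with a distinct element of a sufficiently large extension $\F_{p^{\ell}}$ and passing a single univariate parameter through the circuit. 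Since $P$ has an arithmetic circuit of size $\tilde{O}(n^k)$ and degree $O(n)$, the resulting univariate polynomial has polynomial degree and its coefficients can be produced in $\tilde{O}(n^k)$ time by interpolation. The prover transmits $C$, these coefficients, and the irreducible polynomial defining $\F_{p^{\ell}}$ as in \cite{W16}.

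The verifier then performs three checks. First, it confirms directly in $O(k^2)$ time that $C$ is a clique. Second, it certifies the transmitted polynomial's coefficients by evaluating them at one random point and comparing with a direct evaluation of $P$'s circuit at the corresponding preimage; by Schwartz-Zippel this takes $\tilde{O}(n^{\lfloor k/2\rfloor+2})$ time with error $1/\mathrm{poly}(n)$. Third, it confirms lexicographic minimality by iterating over positions $i\in\{1,\ldots,k\}$: for each vertex $u<v_i$ such that $\{v_1,\ldots,v_{i-1},u\}$ is still a clique, the verifier uses the now-certified polynomial to count $k$-cliques containing this prefix and rejects if the count is ever nonzero. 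Each such conditional count is extracted from the polynomial by restricting Williams's indicator-vector evaluations to the joint neighborhood of the prefix and summing by fast multi-point evaluation, costing $\tilde{O}(n^{\lfloor k/2\rfloor+2})$ per position. Over all $k$ positions the verifier's total time is $\tilde{O}(n^{\lfloor k/2\rfloor+2}k)$, as claimed, and the prover's total time is dominated by the brute-force enumeration and interpolation, both $\tilde{O}(n^k)$.

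The main obstacle is the lexicographic-minimality step: we must extract conditional $k$-clique counts (cliques containing a specified prefix $T$) from the single polynomial the prover already committed to, rather than asking the prover to provide a fresh Williams proof for each subquery. The key observation is that $P$ decomposes into a $\lfloor k/2\rfloor$-clique half and a $\lceil k/2\rceil$-set half, so for $|T|\leq\lfloor k/2\rfloor$ the desired count is a restricted evaluation of $P$ on indicator vectors supported on the joint neighborhood of $T$ and containing $T$, while for $|T|>\lfloor k/2\rfloor$ the symmetric role-reversed form of $P$ gives the same access. A standard union bound over the polynomially many verifier evaluations keeps the overall $\MA$ error at $1/\mathrm{poly}(n)$, yielding canonical completeness and soundness for the pseudo-deterministic $\MA$ proof.
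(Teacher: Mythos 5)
Your proposal takes a genuinely different route through the lexicographic-minimality check, and I'll compare it with the paper's. The paper has the prover send $\lfloor k/2\rfloor$ \emph{modified} polynomials $P_1,\ldots,P_{\lfloor k/2\rfloor}$, where $P_i$ restricts the summand (the $\lfloor k/2\rfloor$-clique ``$S$-half'') to cliques containing $v_1,\ldots,v_{i-1}$ together with some vertex before $v_i$; each $P_i$ is certified separately by its own run of Williams's protocol, which is where the extra $k$ factor in the verifier's time comes from. You instead have the prover commit once to the \emph{original} polynomial $P$ and let the verifier extract each conditional count by restricting the evaluation set (the $\lceil k/2\rceil$-set ``$T$-half'') to those $T$ containing the prefix. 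That move is legitimate and arguably cleaner: the prover's auxiliary message is independent of the claimed clique, there is a single certification step rather than $\lfloor k/2\rfloor$ of them, and it would in fact shave the $k$ factor off the verifier's running time. What each approach buys: the paper's avoids any combinatorial bookkeeping at the verifier because restriction happens inside the committed object, whereas yours pushes all per-position work to cheap post-processing of a single committed object.

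However, the extraction step in your write-up is asserted rather than proved, and as stated it is slightly off. You say the count is ``a restricted evaluation of $P$ on indicator vectors supported on the joint neighborhood of $T$ and containing $T$,'' but summing $P(x_{T'})$ over all $\lceil k/2\rceil$-subsets $T'$ with $T\subseteq T'\subseteq T\cup N(T)$ overcounts badly, since $T'$ need not itself be a clique and $S\cup T'$ is then not a $k$-clique. The correct statement is that one sums $P(x_{T'})$ over $\lceil k/2\rceil$-\emph{cliques} $T'\supseteq T_0$, which yields $\binom{k-|T_0|}{\lceil k/2\rceil-|T_0|}$ times the number of $k$-cliques containing $T_0$; this identity should be stated and checked, and it only holds when $|T_0|\leq\lceil k/2\rceil$, so the appeal to a ``role-reversed'' form of $P$ is not needed (for $|T_0|>\lceil k/2\rceil$ you are already in the brute-force regime, as in the paper). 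Two further small points: the verifier must skip $u\in\{v_1,\ldots,v_{i-1}\}$, since otherwise the prefix $\{v_1,\ldots,v_{i-1},u\}$ collapses to a subset of the claimed clique itself and the conditional count is trivially nonzero, causing false rejection; and the claim that the verifier can directly evaluate $P$'s circuit ``at the corresponding preimage'' in time $\tilde{O}(n^{\lfloor k/2\rfloor+2})$ should be justified by pointing at the circuit-size bound $O(n^2\binom{n}{\lfloor k/2\rfloor})$ from Williams's construction.
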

\begin{proof}
We outline the pseudo-deterministic $\MA$ proof below. Let $G = (V,E)$ be the graph for which we want to find a $k$-clique. The interactive proof proceeds as follows: 

\begin{enumerate}
    \item The prover sends the lexicographically first $k$-clique $(v_1,...,v_k)$ and a proof that it is the lexicographically first one to the verifier. The proof is the polynomial $P$ above but modified so that it only sums over  $\lfloor k/2 \rfloor$-cliques that contain $v_1,...,v_{i-1}$ and all vertices before $v_i$, and a polynomial is send for every $1 \leq i \leq \lfloor k/2 \rfloor$. 
    \item The verifier checks whether $(v_1,...,v_k)$ is a $k$-clique and outputs $\bot$ if it is not a $k$-clique.
    \item The verifier checks whether the proof is correct and outputs $(v_1,...,v_k)$ if it is correct and $\bot$ otherwise.
\end{enumerate} 
Without loss of generality let $(v_1,...,v_k)$ be the lexicographically first $k$-clique with $v_1 < v_2 ... < v_k$. To verify that the $k$-clique $(v_1,...,v_k)$ is the lexicographically first one, we need to check that $(v_1,...,v_k)$ is a clique and there is no $k$-clique with any vertex $u$ before $v_1$, there is no $k$-clique with $v_1$ and any vertex $u$ before $v_2$, and so on. It is easy for the verifier to check whether $(v_1,...,v_k)$ is a $k$-clique. To prove there is no $k$-clique before it, we will have to modify the polynomial $P$ from above. For $i$ from 1 to $\lfloor k/2 \rfloor$, we will modify the polynomial to only sum over $\lfloor k/2 \rfloor$-cliques that contain $v_1,...,v_{i-1}$ and all vertices before $v_i$ instead of all $\lfloor k/2 \rfloor$-cliques as this sum will tell us whether there are $k$-cliques containing $v_1,...,v_{i-1}$ and a vertices before $v_i$, which would be a clique lexicographically before $(v_1,...,v_k)$. We showed above that there exists an $\MA$-proof certifying the value of this polynomial at the necessary points where the prover takes time $\tilde{O}(n^k)$ and the $\tilde{O}(n^{\lfloor k/2 \rfloor + 2})$. For $i$ from $\lfloor k/2 \rfloor + 1$ to $k$, the verifier can check whether there are $k$-cliques containing $v_1,...,v_{i-1}$ and all vertices before $v_i$ in time $\tilde{O}(n^{\lfloor k/2 \rfloor + 2})$ using brute force. If all of the checks hold, the verifier outputs $(v_1,...,v_k)$ otherwise it outputs $\bot$. This gives a pseudo-deterministic $\MA$ proof of $k$-Clique.
\end{proof}

\end{subsection}

\begin{subsection}{3-SUM and problems reducible to 3-SUM}

3-SUM is the problem to find 3 numbers that sum to 0, where the numbers are drawn from 3 lists. The 3-SUM problem has an easy $O(n^2)$ time algorithm and this can be improved by polylogarithmic factors \cite{chan2018more}. It is an outstanding open question whether there is an algorithm that is much faster than $O(n^2)$, and finding such an algorithm would give faster algorithms for a host of other problems in computational geometry \cite{GO95, BGO97}. We will show a pseudo-deterministic proof where the verifier runs in time $\tilde{O}(n^{1.5})$. 

\begin{definition}
We say the 3-SUM problem is the problem of, given 3 lists $a_1,...,a_n$, $b_1,...,b_n$, $c_1,...,c_n$, of $O(\log n)$ bit integers, finding a triple $a_i,b_j,c_k$ such that $a_i + b_j + c_k = 0$. 
\end{definition}

In addition, \cite{CGI16} gives a nondeterministic proof that there is no triple of elements that sum to 0 where the verifier takes time $\tilde{O}(n^{1.5})$. 

\begin{theorem}[\cite{CGI16}]
3-SUM $\in \mathrm{coNTIME}(\tilde{O}(n^{1.5}))$.
\end{theorem}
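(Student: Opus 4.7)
The plan is to construct a nondeterministic certificate of the form ``a prime $M$ plus an enumeration of all modular collisions,'' relying on the observation that any true 3-SUM triple $(a,b,c)$ with $a+b+c = 0$ also satisfies $a+b+c \equiv 0 \pmod{M}$ for every $M$. If the verifier can confirm that the prover has listed every triple satisfying the modular relation and that none of them actually sums to zero over $\Z$, the verifier will have certified that no 3-SUM solution exists.

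First, the prover would nondeterministically choose a prime $M \in [n^{1.5}, 2n^{1.5}]$ together with a sorted list $\mathcal{L}$ of index triples $(i,j,k)$ satisfying $a_i+b_j+c_k \equiv 0 \pmod{M}$. The verifier then performs three checks: (i) a polylogarithmic primality test on $M$; (ii) for each $(i,j,k) \in \mathcal{L}$, a direct check that the modular relation holds and that $a_i+b_j+c_k \neq 0$ as an integer, costing $O(\log n)$ per triple; and (iii) a completeness check that $\mathcal{L}$ enumerates \emph{every} triple satisfying the modular relation.

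The heart of the protocol is step (iii), which I would implement via FFT over $\Z_M$. Let $f_A, f_B, f_C : \Z_M \to \Z_{\geq 0}$ be the residue-frequency vectors defined by $f_X(r) = |\{x \in X : x \equiv r \pmod{M}\}|$, and compute the cyclic triple-convolution value $(f_A * f_B * f_C)(0)$ in $\tilde{O}(M \log M) = \tilde{O}(n^{1.5})$ time. This value equals the total number of index triples in $[n]^3$ whose entries sum to $0$ modulo $M$. If it matches $|\mathcal{L}|$ and $\mathcal{L}$ is strictly sorted (so contains no repetitions), then $\mathcal{L}$ is exactly the collision set, and steps (i)--(iii) together certify the no-instance in $\tilde{O}(n^{1.5})$ time.

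The main obstacle is completeness: guaranteeing that for every no-instance there is a prime $M$ in the target range with $|\mathcal{L}| = \tilde{O}(n^{1.5})$, so that step (ii) fits within the budget. I would prove this by a counting argument over primes. Each of the $n^3$ index triples has value $s = a_i+b_j+c_k$ of magnitude $\mathrm{poly}(n)$ and hence at most $O(\log n)$ distinct prime divisors; the prime number theorem gives $\Theta(n^{1.5}/\log n)$ primes in $[n^{1.5}, 2n^{1.5}]$, so a uniformly random such prime divides $s$ with probability $\tilde{O}(n^{-1.5})$. Summing over the $n^3$ triples, the expected size of $\mathcal{L}$ is $\tilde{O}(n^{1.5})$, so at least one $M$ in the range achieves this bound and may be nondeterministically guessed. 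Soundness is immediate: any genuine 3-SUM triple lies in $\mathcal{L}$ for every $M$, so the verifier detects $a_i+b_j+c_k = 0$ during step (ii) and rejects.
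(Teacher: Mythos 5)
Your proof is correct and follows essentially the same route as the argument the paper sketches (which itself follows \cite{CGI16}): nondeterministically supply a prime modulus and the complete list of index triples colliding to $0$ modulo that prime, let the verifier certify completeness of the list via an FFT count and confirm that no listed triple sums to $0$ over $\Z$, and use a counting argument over primes to guarantee a modulus for which the collision list has size $\tilde O(n^{1.5})$. The only cosmetic difference is your choice of primes in $[n^{1.5}, 2n^{1.5}]$ versus the paper's first $n^{1.5}$ primes; the counting argument is the same either way.
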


\begin{theorem}
\label{3SUM}
There exists a non-deterministic proof for $\overline{\text{3-SUM}}$ where the verifier runs in time $\tilde{O}(n^{1.5})$ and the prover runs in expected time $\tilde{O}(n^2)$.
\end{theorem}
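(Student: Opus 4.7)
The plan is to take the CGI16 nondeterministic witness for $\overline{\text{3-SUM}}$ and exhibit a randomized prover that produces it in expected $\tilde O(n^2)$ time. The CGI16 construction uses a random linear hash $h(x) = \alpha x \bmod p$ with $p = \Theta(\sqrt n)$ prime, partitioning each of the lists $A, B, C$ into $p$ buckets. Because $h$ is linear, any triple $(a,b,c)$ summing to zero lands in a bucket triple $(B_i^A, B_j^B, B_k^C)$ with $i+j+k \equiv 0 \pmod p$; there are $O(p^2) = O(n)$ such compatible triples, and with high probability every bucket contains $\tilde O(\sqrt n)$ elements. The witness consists of the hash function together with, for each compatible bucket triple, a short certificate of 3-SUM-emptiness of size $\tilde O(\sqrt n)$ that the verifier can check in $\tilde O(\sqrt n)$ time. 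Summing over the $O(n)$ compatible triples gives the claimed $\tilde O(n^{1.5})$ verifier time.

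My plan for the prover is as follows. First, sample $p$ and $\alpha$ and compute the bucket partition in $\tilde O(n)$ time. Pairwise independence of $h$, via Chebyshev, bounds the probability of any bucket exceeding the $\tilde O(\sqrt n)$ target by a constant less than one; the prover rejects and resamples if this happens, which is the source of the \emph{expected} qualifier in the statement. Second, for each of the $O(n)$ compatible bucket triples, run the classical $O(m^2)$-time 3-SUM algorithm on the $O(\sqrt n)$-sized sub-instance (sort the first two buckets and, for each pair, do a binary-search lookup against the third). This costs $\tilde O(n)$ per triple and $\tilde O(n^2)$ in total; if any 3-SUM is found the instance is not in $\overline{\text{3-SUM}}$ and the prover aborts. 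Third, extract the emptiness sub-witness directly from the structures produced during within-bucket solving—sorted bucket contents together with the lookup pointers witnessing that every queried sum is absent—which incurs only polylogarithmic overhead on top of the $O(m^2)$ solving cost.

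The main obstacle is establishing that an emptiness sub-witness checkable in $\tilde O(m)$ time per bucket triple can actually be produced in $\tilde O(m^2)$ time per triple: naively, witnessing the absence of all $\Theta(m^2)$ candidate sums in the third bucket seems to require either $\Omega(m^2)$ certificate length or superquadratic prover work. This is resolved by the structural property exploited in CGI16, namely that after the linear hash, elements within a bucket are confined to a single residue class, so the within-bucket 3-SUM reduces to an intersection question on shifted sorted lists whose emptiness has a compact certificate of size $\tilde O(m)$, assembled as a byproduct of the quadratic sweep. Granting this per-triple primitive, the total prover work decomposes into $\tilde O(n)$ for hashing, $\tilde O(n^2)$ for running the quadratic algorithm on the $O(n)$ bucket triples, and $\tilde O(n^{1.5})$ for writing down the assembled witness. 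Together with the constant-probability resampling argument for bucket balance, this yields the claimed expected $\tilde O(n^2)$ prover time.
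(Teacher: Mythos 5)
Your proposal diverges substantially from the paper's argument, and the step you flag as ``the main obstacle'' is in fact a genuine, unresolved gap rather than something CGI16 resolves for you.

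The paper's certificate for $\overline{\text{3-SUM}}$ is not a bucketed partition at all. It consists of a single prime $p$ drawn from the first $n^{1.5}$ primes (so $p = \tilde\Theta(n^{1.5})$), together with an \emph{explicit enumeration} of every triple $(i,j,k)$ with $a_i + b_j + c_k \equiv 0 \pmod p$, and the count $t$ of such triples. Since each of the $n^3$ sums is an $O(\log n)$-bit integer, it has $O(\log n)$ prime divisors, so the total number of (triple, prime) collision pairs over the first $n^{1.5}$ primes is $\tilde O(n^3)$; by Markov at least half the primes admit only $\tilde O(n^{1.5})$ collisions. The verifier confirms $p$ is a valid prime, recomputes $t$ in $\tilde O(n^{1.5})$ time by an FFT convolution of the three residue-histograms mod $p$ (this is what prevents the prover from omitting a collision), and then checks each of the $\tilde O(n^{1.5})$ listed triples does not sum to $0$ over $\mathbb Z$. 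The prover's $\tilde O(n^2)$ expected time comes from computing all pairwise sums mod $p$, counting collisions, and resampling $p$ if there are too many. No bucketing, no per-bucket sub-certificates.

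Your scheme hashes into $p = \Theta(\sqrt n)$ buckets and claims a size-$\tilde O(\sqrt n)$ emptiness certificate per compatible bucket triple. That claim does not follow. A bucket is just the preimage of a residue class under $h(x)=\alpha x \bmod p$: its elements can differ by arbitrary multiples of $p$, and after writing $a = pa'+i$, $b=pb'+j$, $c=pc'+k$, the within-bucket condition $a+b+c\ne 0$ becomes $a'+b'+c'\ne T$ for $T\in\{0,-1,-2\}$, which is an unconstrained 3-SUM(-with-target) instance on $m=\tilde O(\sqrt n)$ arbitrary integers. There is no known certificate of size $\tilde O(m)$, checkable in $\tilde O(m)$ time, for emptiness of a general $m$-element 3-SUM instance; the linear hash gives no structural shortcut, and ``sorted bucket contents plus lookup pointers'' either has length $\Theta(m^2)$ (one pointer per pair) or forces the verifier to redo the quadratic sweep, blowing the $\tilde O(n^{1.5})$ verifier budget. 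Worse, if such a primitive existed unconditionally you could apply it directly to the full $n$-element instance and conclude 3-SUM $\in \mathrm{coNTIME}(\tilde O(n))$, which is far stronger than the theorem being proved — a sign that the reduction is circular. The prover-side pieces of your plan (resampling for bucket balance, brute-force within bucket triples) are fine as far as they go, but they rest on a verifier-side primitive that you have not constructed and that, as stated, cannot exist by the argument above.
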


\begin{proof}
There is a proof of this fact when the prover is allowed an unbounded amount of time in \cite{CGI16}. Here we modify the proof in such a way that the prover runs in expected time $\tilde{O}(n^2)$. The prover will do the following to find a proof that there is no triple of elements that sums to 0.

On input $(a_1,...,a_n),(b_1,...,b_n),(c_1,...,c_n)$, 
\begin{enumerate}
    \item The prover computes the first $n^{1.5}$ primes. This takes time $\tilde{O}(n^{1.5})$ as only $\tilde{O}(n^{1.5})$ numbers need to be checked to find the first $n^{1.5}$ primes. 
    \item The prover selects a random prime $p$ in the first $n^{1.5}$ primes and checks whether there are less than $n^{1.5}\log^2(n)$ triples $(a_i,b_j,c_k)$ such that $a_i + b_j + c_k = 0 \pmod{p}$. If there are more, the prover tries primes until they find a $p$ such that there are fewer triples that sum to $0\pmod p$.
    \item The prover sends the verifier $p$, the number of triples $(a_i,b_j,c_k)$ such that $a_i + b_j + c_k = 0 \pmod{p}$ $t$, and the set of triples $(a_i,b_j,c_k)$ such that $a_i + b_j + c_k = 0 \pmod{p}$.
\end{enumerate}
 Since there are $n^3$ triples and each sum must be a product of at most $\log(n)$ primes, we get that there are at most $n^3\log n$ pairs of triples and primes $((a_i,b_j,c_k),p)$ such that $a_i + b_j + c_k = 0 \pmod p$. Thus, in the first $n^{1.5}$ primes, over half the primes $p$ will have less than $n^{1.5} \log n$ triples that sum to 0 mod $p$ by Markov's inequality. Thus if we sample a random prime in the first $n^{1.5}$ primes, we will get a good prime with high probability. Computing the sums and checking whether they are equal to 0 mod $p$ still takes time $\tilde{O}(n^2)$ deterministically.
 
 The verifier behaves in the same way as in \cite{CGI16}:
 
 On input $(a_1,...,a_n),(b_1,...,b_n),(c_1,...,c_n)$,
 \begin{enumerate}
     \item The verifier receives $p$, $t$, and $\{(i,j,k)\}$ from the prover. Then, the verifier rejects unless $p$ is prime and the number of triples $(i,j,k)$ is $t$.
     \item The verifier uses Fast Fourier Transform (as in \cite{CGI16}) to compute the number of triples $a_i,b_j,c_k$ such that $a_i + b_j + c_k = 0 \pmod p$ and rejects unless there are $t$ triples.
     \item The verifier computes $a_{i_l} + b_{j_l} + c_{k_l}$ and $a_{i_l} + b_{j_l} + c_{k_l} \pmod p$ for every $(i_l,j_l,k_l)$ in the set of triples $\{(i,j,k)\}$ and rejects unless every triple has the property that $a_{i_l}+ b_{j_l} + c_{k_l} = 0 \pmod p$ and $a_{i_l}+ b_{j_l} + c_{k_l} \neq 0$.
 \end{enumerate}
 
 The soundness of this proof follows from \cite{CGI16}.  
\end{proof}

With this, we can construct a pseudo-deterministic proof for 3-SUM where the prover runs in time almost equal to the best known deterministic algorithm for 3-SUM. 

\begin{theorem}
3-SUM has a $(\tilde{O}(n^{1.5}),\tilde{O}(n^2))$ pseudo-deterministic proof. 
\end{theorem}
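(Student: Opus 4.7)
The plan is to have the prover certify that a specific triple $(a_i, b_j, c_k)$ is the lexicographically first zero-sum solution under the natural order on index triples $(i,j,k)$, reusing Theorem \ref{3SUM} as the key non-existence primitive. The honest prover first finds the lex-first solution by the standard $\tilde{O}(n^2)$ brute-force scan; if none exists, it sends the full non-existence proof from Theorem \ref{3SUM}, which the verifier checks in $\tilde{O}(n^{1.5})$ time.

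Otherwise the prover sends the triple $(a_i, b_j, c_k)$ together with three non-existence witnesses: (i) an invocation of Theorem \ref{3SUM} on the reduced instance $(a_1, \ldots, a_{i-1}),(b_1, \ldots, b_n),(c_1, \ldots, c_n)$, ruling out any solution with strictly smaller first index; (ii) a certificate that no pair $(b_{j'}, c_{k'})$ with $j' < j$ satisfies $b_{j'} + c_{k'} = -a_i$, which in fact the verifier performs itself in $\tilde{O}(n)$ time by hashing $\{b_1, \ldots, b_{j-1}\}$ and looking up $-a_i - c_{k'}$ for each $k'$ (so this stage requires no prover input); and (iii) a direct check by the verifier that $a_i + b_j + c_{k'} \neq 0$ for every $k' < k$, in $O(n \log n)$ time. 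The verifier also confirms $a_i + b_j + c_k = 0$ and outputs the triple only if every check passes, otherwise $\bot$.

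Canonical completeness is immediate from correctness of the brute-force search and of Theorem \ref{3SUM}. For canonical soundness, any accepting transcript forces the submitted triple to be a genuine solution and to have no earlier triple in the lex order; the only way a cheating prover can subvert this is to forge the certificate (i) from Theorem \ref{3SUM}, which fails with at most inverse-polynomial probability. The prover's total running time is $\tilde{O}(n^2)$: the lex-first search is $\tilde{O}(n^2)$ and the single Theorem \ref{3SUM} certificate costs another $\tilde{O}(n^2)$. The verifier's time is dominated by the $\tilde{O}(n^{1.5})$ check of that certificate, with the remaining 2-SUM and trailing checks contributing only $\tilde{O}(n)$.

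The main obstacle is that a naive recursive lex-first decomposition would place an expensive non-existence certificate at every coordinate, inflating one of the two budgets. The fix is to notice that only the first-coordinate non-existence genuinely requires the $\tilde{O}(n^{1.5})$-verifier Theorem \ref{3SUM}; once $a_i$ is fixed the problem collapses to 2-SUM, which admits a near-linear verifier-only check via hashing, and once $a_i,b_j$ are fixed the remaining condition is a trivial linear scan. This asymmetric decomposition is what keeps both running times within the claimed bounds.
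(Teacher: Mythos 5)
Your proposal is correct and follows essentially the same approach as the paper: both identify the lexicographically first triple and invoke Theorem~\ref{3SUM} to certify that no solution has a strictly smaller prefix. The one variation is that where the paper has the prover send three separate $\overline{\text{3-SUM}}$ certificates (for the prefixes $(a_{<i},b,c)$, $(a_i,b_{<j},c)$, $(a_i,b_j,c_{<k})$), you notice that once $a_i$ is pinned down the remaining two checks degenerate to a 2-SUM and a linear scan, both of which the verifier can perform unaided in $\tilde{O}(n)$ time; this is a tidy refinement that avoids applying Theorem~\ref{3SUM} to degenerate instances, but it does not change the asymptotics or the overall argument. One small inaccuracy worth flagging: the certificate from Theorem~\ref{3SUM} is a deterministic NP witness checked by a deterministic verifier, so its soundness is perfect, not merely inverse-polynomial as you state; the randomness in Theorem~\ref{3SUM} lives entirely in the honest prover's search for a good prime, not in the verification.
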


\begin{proof}
To give a pseudo-deterministic proof the prover must prove that a given solution is a solution and it is a canonical solution. For 3-SUM, the canonical solution will be the lexicographically first solution; this is the triple $(a_i,b_j,c_k)$ with $a_i + b_j + c_k = 0$ such that for any other triple $(a_{i'},b_{j'}, c_{k'})$ with $a_{i'} + b_{j'} + c_{k'} = 0$ either $i' > i$, $i' = i$ and $j' > j$, or $i' = i$, $j' = j$, and $k' > k$. The interactive proof will proceed as follows. 
\begin{enumerate} \item The prover sends $a_i,b_j,c_k$, a proof that the lists $(a_1,...,a_{i-1}),( b_1,...,b_n),(c_1,...,c_n)$ do not have a 3-SUM, a proof that the lists $(a_i),( b_1,...,b_{j-1}),(c_1,...,c_n)$ do not have a 3-SUM, and a proof that the lists $(a_i),( b_j),(c_1,...,c_{k-1})$ do not have a 3-SUM. 
\item The verifier then checks that $a_i + b_j + c_k = 0$ and that all of the proofs are correct and outputs $(a_i,b_j,c_k)$ if both of these conditions hold otherwise it outputs $\bot$.
\end{enumerate} 
By Theorem \ref{3SUM} this can be done with the prover running in time $\tilde{O}(n^2)$ and the verifier running in time $\tilde{O}(n^{1.5})$.

\end{proof}

\begin{corollary}
Determining whether there are three collinear points in a set of points on the plane has a $(\tilde{O}(n^{1.5}),\tilde{O}(n^2))$ pseudo-deterministic proof.  
\end{corollary}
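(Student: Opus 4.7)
The plan is to run the pseudo-deterministic protocol of the preceding 3-SUM theorem essentially verbatim, with the collinearity predicate $\Delta(i,j,k)=\det\!\begin{pmatrix}x_i&y_i&1\\x_j&y_j&1\\x_k&y_k&1\end{pmatrix}$ playing the role of the 3-term sum $a_i+b_j+c_k$. The canonical solution is the lexicographically first index triple $(i,j,k)$ with $i<j<k$ and $\Delta(i,j,k)=0$, which the prover locates in $\tilde O(n^2)$ time by the standard ``sort points by slope from each anchor'' algorithm. Using the three-prefix lexicographic recursion from the 3-SUM proof, the prover sends the canonical triple together with non-existence certificates for earlier collinear triples in each lexicographic prefix; the innermost prefix (fixed $i,j$ with varying third index $k'<k$) is a direct $O(n)$ verification, and the verifier additionally confirms $\Delta(i,j,k)=0$ in $O(1)$ time.

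The remaining ingredient is an analog of Theorem~\ref{3SUM}: a non-deterministic proof that a given set of $n$ planar points contains no collinear triple, with verifier time $\tilde O(n^{1.5})$ and expected prover time $\tilde O(n^2)$. The construction follows the 3-SUM template almost unchanged. The prover picks a random prime $p$ from among the first $\tilde O(n^{1.5})$ primes, sends $p$ together with the list of all triples whose determinant vanishes modulo $p$; the verifier then checks that every listed triple has $\Delta\ne 0$ over $\mathbb Z$ and that the list is exhaustive by a direct recount. The Markov bound from Theorem~\ref{3SUM} carries over unchanged: every nonzero $\Delta$ is an integer of magnitude $O(\mathrm{poly}(n))$ with only $O(\log n)$ prime divisors, so for most random primes the number of spurious mod-$p$ vanishings is at most $\tilde O(n^{1.5})$.

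The main obstacle is the verifier's mod-$p$ recount. For 3-SUM the recount is a single additive convolution computed by a length-$p$ FFT in $\tilde O(p)=\tilde O(n^{1.5})$ time; for 3-collinear no such direct convolutional form is available because $\Delta$ is a trilinear form rather than an additive one. The plan is to linearize row by row: since $\Delta$ is linear in each of its three rows, fixing the multiset of pair-contributions $(x_jy_k-x_ky_j,\,y_j-y_k,\,x_k-x_j)$ over all pairs $(j,k)$ reduces the count to a bilinear incidence between the $n$ input points $(x_i,y_i,1)$ and an aggregated table of pair-summaries; if that aggregation can be bucketed into $\tilde O(n^{1.5})$ total mass and FFT-convolved against the input, the recount runs in $\tilde O(n^{1.5})$ as required. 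Once this counting subroutine is in place the rest of the pseudo-deterministic argument is word-for-word identical to the 3-SUM proof.
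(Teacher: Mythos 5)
The paper states this as an unproven corollary at the end of the subsection titled ``3-SUM and problems reducible to 3-SUM,'' so the intended argument is simply to invoke a reduction from the collinearity problem to 3-SUM (\`a la \cite{GO95, BGO97}) and then apply the preceding 3-SUM theorem as a black box. You instead try to re-derive the protocol from scratch by swapping the additive predicate $a_i+b_j+c_k$ for the determinant $\Delta(i,j,k)$, and this is where the argument breaks.

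The gap is exactly the one you flag at the end but do not close: the verifier's $\tilde O(n^{1.5})$ mod-$p$ recount. For 3-SUM, counting triples with $a_i+b_j+c_k\equiv 0\pmod p$ is a cyclic convolution of three indicator vectors of length $p=\tilde O(n^{1.5})$, hence FFT-able in $\tilde O(p)$ time. For collinearity, writing $\Delta(i,j,k)=\langle(x_i,y_i,1),\,(y_j-y_k,\ x_k-x_j,\ x_jy_k-x_ky_j)\rangle$ does linearize in the first row, but the resulting counting problem is an orthogonality count between $n$ points and $\Theta(n^2)$ pair-vectors in $\mathbb F_p^3$, not a one-dimensional convolution. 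Bucketing pair-vectors by value needs $\Theta(p^2)$ or $\Theta(p^3)$ buckets, far beyond the $\tilde O(n^{1.5})$ budget, and there is no additive group structure on triples under which $\Delta=0$ becomes a 3-SUM over a length-$p$ alphabet. Equivalently, the recount is a zero-weight-triangle-mod-$p$ count, for which the natural bound is $n^\omega$, not $n^{1.5}$. So the ``word-for-word identical'' conclusion does not hold: the step you defer is the entire content of the corollary. The fix is to not rebuild the protocol at all but to reduce the collinearity instance to a 3-SUM instance in near-linear time and invoke the 3-SUM theorem, which is the route the paper's section structure is pointing at.
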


\end{subsection}

\begin{subsection}{Hitting Set}

The Hitting Set problem is, given two collections of sets, find a set in the first collection that intersects every set in the second collection. The Hitting Set problem is also conjectured to take $m^{2 - o(1)}$ time \cite{V15}. Here we give a pseudo-deterministic proof in which the verifier runs in linear time.

\begin{definition}
The Hitting Set problem is, given two collections $\mathcal{S}, \mathcal{T}$ of sets, find a set $S$ such that $S \cap T \neq \emptyset \forall T \in \mathcal{T}$.
\end{definition}
\begin{theorem}[\cite{CGI16}] \label{hittingset}
There is a nondeterministic proof where the verifier runs in time $O(m), m = \sum_{S \in \mathcal{S}} |S| + \sum_{T \in \mathcal{T}} |T|$, and the prover runs in time $O(m^2)$ for the Hitting Set problem and the complement of the Hitting Set problem. 
 \end{theorem}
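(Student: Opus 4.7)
The plan is to construct the two nondeterministic proofs separately, since the challenge differs between the "yes" direction (a hitting set exists in $\mathcal{S}$) and the "no" direction (none does). Throughout I assume a constant-time hashing/comparison model for universe elements, so that a set $X$ can be turned into a hash table in $O(|X|)$ time with $O(1)$ membership queries afterwards.

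For the "yes" direction, where some $S^{*} \in \mathcal{S}$ hits every $T \in \mathcal{T}$, the certificate is $S^{*}$ together with a function $w \colon \mathcal{T} \to S^{*}$ that selects a claimed element $w(T) \in S^{*} \cap T$ for each $T$. The verifier builds a hash table on $S^{*}$ in time $O(|S^{*}|)$, and then for each $T$ scans $T$ to verify $w(T) \in T$ while performing one hash lookup to check $w(T) \in S^{*}$; the total verification time is $O(|S^{*}| + \sum_{T} |T|) = O(m)$. The prover locates $S^{*}$ by iterating over candidates $S \in \mathcal{S}$: hash $S$, then scan each $T$ for any element in $S$ and record the first hit as $w(T)$. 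Each candidate is processed in $O(m)$ time and there are at most $|\mathcal{S}| \le m$ of them, giving prover time $O(m^{2})$.

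The "no" direction is where the main obstacle lies. We must certify that every $S \in \mathcal{S}$ misses some $T_{S} \in \mathcal{T}$. The naive pairwise certificate — assigning $T_{S}$ to each $S$ and hashing $T_{S}$ independently to scan $S$ for disjointness — fails to yield $O(m)$ verifier time, because a single popular $T$ could be hashed many times (once per $S$ that uses it). The key idea is to regroup the certificate by $T$: the prover sends the assignment $S \mapsto T_{S}$, and the verifier first buckets these pairs in $O(m)$ time to form a partition $\mathcal{S} = \bigsqcup_{T \in \mathcal{T}} L_{T}$ with $L_{T} = \{S : T_{S} = T\}$, then for each $T$ builds its hash table exactly once in $O(|T|)$ time and, for each $S \in L_{T}$, scans $S$ confirming via hash lookups that no element of $S$ lies in $T$. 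The work at $T$ is $O(|T| + \sum_{S \in L_{T}} |S|)$, which sums to $O(\sum_{T} |T| + \sum_{S} |S|) = O(m)$ since each $S$ appears in exactly one bucket. The prover constructs the assignment by, for each $S$, hashing $S$ and scanning $\mathcal{T}$ until a disjoint $T$ is found; this is $O(m)$ per $S$ and $O(m^{2})$ in total.

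The crux is the amortization in the complement case: grouping the disjointness witnesses by $T$ so that each $T$'s hash table is constructed once and then reused across all the $S$'s it witnesses is what collapses the verifier's total work to linear. Everything else — producing witnesses on the prover side, and the straightforward certificate for the "yes" case — is routine bookkeeping with hash tables.
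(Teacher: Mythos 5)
Your proof is correct and follows the same certificate structure as the paper's: for the ``yes'' direction, a hitting set $S^{*}$ together with a pointer $w(T)\in S^{*}\cap T$ for each $T$ (the paper calls these $u_T$); for the ``no'' direction, an assignment $S\mapsto T_S$ witnessing disjointness. The prover bounds match as well. The genuine contribution of your write-up is in the verification: the paper simply asserts that the verifier ``checks that for every $S$, $S\cap T_S$ is empty'' in $O(m)$ total time, silently passing over the fact that a naive per-pair check costs $\Theta(|S|+|T_S|)$, and $\sum_{S}|T_S|$ is not bounded by $m$ when one popular $T$ serves as the witness for many $S$'s. Your bucketing-by-$T$ argument closes that gap cleanly: each $T$ is hashed once, each $S$ is scanned once against the table of its assigned $T$, and the total is $O\bigl(\sum_T |T| + \sum_S |S|\bigr)=O(m)$. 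A marginally simpler variant of the same idea is to build hash tables for all $T\in\mathcal{T}$ up front in $O(\sum_T|T|)=O(m)$ time and then, for each $S$, scan $S$ against the pre-built table for $T_S$; your bucketing is effectively a lazy instantiation of this and is equally valid. You should also have the verifier confirm that the claimed $S^{*}$ (resp.\ each claimed $T_S$) is indeed a member of $\mathcal{S}$ (resp.\ $\mathcal{T}$) and that every $S\in\mathcal{S}$ receives an assignment, but these are $O(m)$ bookkeeping checks and do not change the bounds.
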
 
\begin{proof}
 The nondeterministic proof for Hitting Set proceeds as follows. The prover uses brute force to find a set $S$ such that $S \cap T \neq \emptyset \forall T \in \mathcal{T}$ in time $O(m^2)$ and sends it to the verifier, along with a list of $u_T$ such that $u_T \in S \cap T$. The verifier can check whether $S \cap T \neq \emptyset \forall T \in \mathcal{T}$ in time $O(m)$ by looping over every $T$ and checking that $u_T \in S \cap T$. 
 
 The nondeterministic proof for the complement of Hitting Set proceeds as follows. The prover sends, for every $S \in \mathcal{S}$, a set $T_{S} \in \mathcal{T}$ such that $S \cap T = \emptyset$. The verifier checks that for every $S$ $S \cap T_{S}$ is empty. The prover can find such a set $T$ for every $S$ in time $O(m^2)$ and the verifier runs in time $O(m)$. 
\end{proof}
 
\begin{theorem}
Hitting Set has a $(O(m),O(m^2))$ pseudo-deterministic proof.
\end{theorem}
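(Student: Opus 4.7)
The plan is to reuse the 3-SUM recipe: declare the canonical solution to be the lexicographically first witness in $\mathcal{S}$, and have the prover supply two certificates from Theorem~\ref{hittingset} --- a positive one for that witness and a negative one for everything preceding it. Concretely, order $\mathcal{S}$ as $S_1, S_2, \ldots$ in the order given by the input, let $i^*$ be the smallest index such that $S_{i^*}$ hits every $T \in \mathcal{T}$, and declare $c(\mathcal{S}, \mathcal{T}) = S_{i^*}$.

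For the protocol, I would have the prover first locate $i^*$ by brute force: hash each $S_j$ into the universe in time $O(|S_j|)$ and test each $T$ against the hash in time $O(|T|)$, giving $O(m)$ per candidate and $O(m^2)$ in total. The prover then sends $i^*$ along with (i) the hitting-set certificate for $S_{i^*}$ guaranteed by Theorem~\ref{hittingset}, namely an element $u_T \in S_{i^*} \cap T$ for each $T \in \mathcal{T}$, and (ii) the complement-hitting-set certificate of Theorem~\ref{hittingset} applied to the restricted family $\mathcal{S}' = \{S_1,\ldots,S_{i^*-1}\}$ together with $\mathcal{T}$, namely (the index of) a set $T_{S_j} \in \mathcal{T}$ with $T_{S_j} \cap S_j = \emptyset$ for every $j < i^*$. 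The verifier runs the two verification procedures of Theorem~\ref{hittingset}: the positive check costs $O(|S_{i^*}| + \sum_T |T|) = O(m)$, and the negative check costs $O(m')$ where $m' = \sum_{j<i^*} |S_j| + \sum_T |T| \le m$, so the total verifier time is $O(m)$. The verifier outputs $S_{i^*}$ if both checks pass and $\bot$ otherwise.

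Correctness of the honest-prover case is immediate. For canonical soundness, any cheating prover claiming a different $S_{i'}$ must either take $i' < i^*$, in which case $S_{i'}$ is not a hitting set and the positive certificate fails by soundness of the first half of Theorem~\ref{hittingset}, or take $i' > i^*$, in which case the index $i^*$ itself is an $S_j$ with $j < i'$ that does hit every $T$, so no valid disjointness witness $T_{S_{i^*}}$ can exist and the negative certificate fails by soundness of the second half of Theorem~\ref{hittingset}. The only subtle point I anticipate is the accounting in the negative check --- one must confirm that the $O(m)$ bound from Theorem~\ref{hittingset} really does specialize cleanly to a sub-family $\mathcal{S}'$ of $\mathcal{S}$ (so that the $T_{S_j}$'s sent as indices can be read and intersected with the corresponding $S_j$ within the claimed budget); this is where I would spend the most care, but it should follow directly from the same bitmap/hash bookkeeping used in the original proof of Theorem~\ref{hittingset}.
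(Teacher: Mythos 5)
Your proposal is correct and matches the paper's proof essentially line for line: same canonical solution (the first $S\in\mathcal{S}$ in input order that hits every $T\in\mathcal{T}$), same two-part certificate (a hitting-set certificate for $S_{i^*}$ plus a non-hitting-set certificate for the prefix $\mathcal{S}'=\{S_1,\ldots,S_{i^*-1}\}$, both taken from Theorem~\ref{hittingset}), and the same $O(m)$ verifier / $O(m^2)$ prover accounting. The only difference is that you spell out the soundness case analysis and the $m'\le m$ bookkeeping more explicitly than the paper does, which is fine but not a different route.
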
 
 
\begin{proof}
The canonical solution will be the first set $S$ such that $S \in \mathcal{S}$ is a hitting set for $\mathcal{T}$. The interactive proof goes as follows:
\begin{enumerate}
\item The prover sends a proof that the first hitting set $S$ is a hitting set.
\item The prover also sends a proof that the collection $\mathcal{S}'$ of all sets before $S$ in $\mathcal{S}$ does not have a hitting set for $\mathcal{T}$, as in Theorem \ref{hittingset}.
\item The verifier checks that $S$ is a hitting set and outputs $\bot $ if it is not a hitting set. 
\item The verifier checks that the proof that the collection of all sets before $S$ does not have a hitting set for $T$ is correct. If $S$ is a hitting set and the proof is correct, the verifier outputs $S$ otherwise it outputs $\bot$.
\end{enumerate}
Since the proofs for Hitting Set and its complement have the verifier run in time $O(m)$ and the prover run in time $O(m^2)$, so does this proof. To see that this is a pseudo-deterministic proof, the verifier will only accept if $S$ is a hitting set and if the proof that there are no hitting sets before $S$ is correct. 
\end{proof}

\end{subsection}

\begin{subsection}{Model checking of graph properties}

The problem of model checking is to determine whether a graph has a property expressed as a first-order formula over edge predicates. A large number of different graph problems can be expressed as model checking of first-order properties as observed by \cite{CGI16}. For instance both the $k$-Dominating Set problem \cite{PW10} and asking whether a graph has diameter 2 \cite{BCH16} can be written as model checking problems. \cite{W14} shows that given a first-order property of a graph with $k$ quantifiers over vertices, checking whether the graph has this property can be done in time $\tilde{O}(n^{k - 3 + \omega})$. We extend the work of \cite{CGI16} on sparse graphs to provide pseudo-deterministic proofs. 

\begin{definition}
We say a graph property is a formula $Q_1 x_1 \in X_1 Q_2 x_2 \in X_2...Q_k x_k \in X_k \psi$, where $\psi$ is a quantifier-free formula on edge predicates and the model checking problem for a graph property is to determine whether the property holds for a given graph.
\end{definition}

\begin{theorem}[\cite{CGI16}]
If a formula with $k$ quantifiers does not have the form $\exists^{k-1}\forall$, then the model checking problem for the formula can be solved in co-nondeterministic time $m^{k-2}$ where $m$ is the number of edges in the graph. 
\end{theorem}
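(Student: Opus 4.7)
My approach is to prove $\neg\varphi \in \mathrm{NTIME}(m^{k-2})$, which is equivalent to $\varphi \in \mathrm{coNTIME}(m^{k-2})$. Let $s$ be the quantifier prefix of $\varphi$ and let $\bar{s}$ be its dual, obtained by flipping each quantifier. By assumption $s \neq \exists^{k-1}\forall$, equivalently $\bar{s} \neq \forall^{k-1}\exists$. The plan is to design an explicit nondeterministic protocol that Skolemizes the existentials of $\neg\varphi$: for each $\exists$ at position $i$ in $\bar{s}$, the prover supplies a witness function $f_i : V^{a_i} \to V$ indexed by the $a_i$ universal variables preceding position $i$ (total bit length $\tilde{O}(n^{a_i})$). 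The verifier iterates over all tuples of universal variables, consults the Skolem functions to fix the existentials, and evaluates the quantifier-free $\neg\psi$ in $O(k^2)$ time per tuple.

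First I would handle the case where $\bar{s}$ contains at most $k-2$ universal quantifiers: here the direct iteration costs $\tilde{O}(n^{k-2}) \leq \tilde{O}(m^{k-2})$, since $n \leq m+1$ under any standard graph encoding. This dispatches all quantifier prefixes with two or more existentials using only the basic Skolemization protocol described above.

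The remaining case is when $\bar{s}$ has $k-1$ or $k$ universals. Since $\bar{s} \neq \forall^{k-1}\exists$, in the $(k-1)$-universal sub-case the single existential is not at the end, so $\bar{s} = \forall^{i-1}\exists\forall^{k-i}$ with $i \leq k-1$. To beat the naive $n^{k-1}$ bound, I would exploit graph sparsity: because at least one universal variable follows the existential, we can locate a pair of variables (the last universal and a previously fixed vertex, say) that must be constrained by an edge predicate appearing in some literal of $\neg\psi$, and iterate over the $m$ edges instead of $n^2$ vertex-pairs, saving a factor of $n$. For the $\forall^k$ sub-case, i.e.\ $\varphi = \exists^k \psi$, I would appeal to the sparse-graph first-order model-checking protocols of \cite{W14,CGI16}, which give nondeterministic verification of the negated formula in time $\tilde{O}(m^{\lceil k/2 \rceil}) \leq \tilde{O}(m^{k-2})$ for $k \geq 4$ (with small $k$ handled by direct inspection). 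The main obstacle is executing the edge-enumeration trick cleanly in the middle case $\bar{s} = \forall^{i-1}\exists\forall^{k-i}$: one must show that $\neg\psi$ forces some pair of variables to be related by an edge predicate usable for enumeration, which requires decomposing $\neg\psi$ into clauses and arguing the existence of such a pair in at least one of them. This is precisely the structural property that fails for $\bar{s} = \forall^{k-1}\exists$ and therefore explains the excluded form.
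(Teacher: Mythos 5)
The paper quotes this theorem from \cite{CGI16} without reproducing a proof, so there is no internal argument to compare against; I will evaluate your proposal on its own. The overall plan --- dualize, Skolemize the existentials of $\neg\varphi$, and split into cases by the number of universal quantifiers in $\bar{s}$ --- is sensible, and the easy case of at most $k-2$ universals in $\bar{s}$ (equivalently, at least two universals in $s$) is correctly dispatched by the basic certificate.

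The two remaining cases are exactly where the content of the theorem lies, and neither is actually handled. For $\bar{s}=\forall^{i-1}\exists\forall^{k-i}$ you assert that some pair of variables ``must be constrained by an edge predicate appearing in some literal of $\neg\psi$,'' allowing enumeration over the $m$ edges. This is not a consequence of the quantifier prefix: $\psi$ may contain only equality atoms, may have a positive edge atom whose negation in $\neg\psi$ forces a \emph{non}-edge (over which one cannot cheaply iterate in a sparse graph), or, once $\neg\psi$ is put in disjunctive form, may have entire disjuncts imposing no edge relation at all between the two variables you wish to pair. Edge enumeration is a semantic shortcut requiring a property of $\psi$ that you never establish, and your closing claim that this missing property is ``precisely'' what fails for $\bar{s}=\forall^{k-1}\exists$ is incorrect --- that prefix is excluded for a purely syntactic reason, namely that the lone existential is last, so its Skolem function alone has description length $\Theta(n^{k-1}\log n)$, independent of which atoms appear in $\psi$. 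The $\bar{s}=\forall^{k}$ case is likewise incomplete: appealing to a model-checking bound from \cite{W14,CGI16} is circular since the present statement is itself one of the \cite{CGI16} results in that family, the quoted $\tilde{O}(m^{\lceil k/2\rceil})$ bound only beats $m^{k-2}$ for $k\ge 4$, and relegating $k=3$ (where one must produce, e.g., a near-linear nondeterministic certificate of triangle-freeness) to ``direct inspection'' conceals the hard part rather than resolving it. As written, the argument establishes only the easy case and leaves the genuinely nontrivial prefixes unproved.
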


\begin{theorem}[\cite{CGI16}]
The deterministic complexity of model checking a $k$-quantifier formula is $O(m^{k-1})$. 
\end{theorem}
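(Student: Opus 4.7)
The plan is to prove the bound by induction on $k$, exploiting sparsity by traversing edges of $G$ rather than all of $V$ when enumerating vertex assignments. I first put the quantifier-free matrix $\psi$ into DNF; since $|\psi| = O(1)$ this preprocessing is free, and model checking reduces to handling each disjunct separately. Each disjunct is a conjunction of edge and non-edge literals on pairs of variables, and I associate to it a \emph{pattern graph} $H$ on the variable set $\{x_1,\ldots,x_k\}$ with $x_i x_j \in E(H)$ whenever the positive literal $E(x_i,x_j)$ occurs.

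For the inductive step, consider the innermost quantifier $Q_k x_k$, and dualize (using $\forall x \psi \equiv \neg \exists x \neg \psi$) if necessary so that it is existential. Suppose $x_k$ has a positive edge literal with some earlier $x_i$ in the current disjunct. Then for every assignment to the outer variables, the candidate values for $x_k$ lie in $N_G(v_i)$, so enumerating $x_k$ by walking over edges incident to $v_i$ replaces a factor of $n$ by at most $\deg(v_i)+1$ in the running time. Using $\sum_{v_i} \deg(v_i) = 2m$, this converts one factor of $n$ into one factor of $m$. Iterating the same argument at every quantifier level, one factor of $m$ is introduced per quantifier except the outermost, giving $O(m^{k-1})$ whenever the pattern graph $H$ is connected.

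When $H$ is disconnected, with connected components on variable sets of sizes $k_1,\ldots,k_t$ where $\sum k_i = k$ and $t \ge 2$, I would use first-order equivalences to push the quantifier prefix inward so that the formula factorizes (as a product of existentials, or dually using De Morgan for universals) into independent model-checking problems on each component. This is valid because quantifiers over disjoint variable sets commute past subformulas not mentioning the bound variable. Each subproblem is handled by the induction hypothesis in time $O(m^{k_i - 1})$, and the product $\prod_i m^{k_i - 1} = m^{k-t} \le m^{k-1}$ yields the bound.

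The main obstacle is the bookkeeping for mixed quantifier prefixes: when a single $\exists$/$\forall$ alternation block spans variables that end up in different pattern components, the quantifier reordering must be justified carefully, and one has to avoid introducing a blowup when the DNF has several disjuncts whose pattern graphs disagree on connectivity (so different enumeration orders are optimal for different disjuncts). Making the per-level overhead only constant rather than polynomial in $k$, so that the final bound is cleanly $O(m^{k-1})$, is the most delicate part; I would resolve it by a structural induction that chooses, for each disjunct, a spanning walk of $H$ (or of each component) and amortizes the enumeration against $\sum_v \deg(v) = 2m$ one edge at a time.
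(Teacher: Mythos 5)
The paper does not prove this theorem---it imports it as a black-box result from \cite{CGI16}---so there is no in-paper argument to compare against, and your attempt has to stand on its own. As written, it has a gap at the very first step that the rest of the argument inherits: after putting $\psi$ in DNF you claim ``model checking reduces to handling each disjunct separately,'' but disjunction does not distribute over universal quantifiers. You may split only the innermost block of existentials across the disjuncts; once you pass a $\forall$, the formula $\forall x\,(D_1\lor D_2)$ is \emph{not} $(\forall x\,D_1)\lor(\forall x\,D_2)$. Concretely, after you evaluate $\exists x_k$ disjunct-by-disjunct, the outer quantifier sees a disjunction of derived predicates such as $\phi_i(x_1,\ldots,x_{k-1}) \equiv \exists x_k\, D_i$, which is no longer a quantifier-free Boolean combination of edge literals, so your induction on $k$ does not close. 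The ``dualize so the innermost quantifier is existential'' move compounds this: replacing $\psi$ by $\neg\psi$ turns your DNF into a CNF, and after re-normalizing, the positive edge literal on $x_k$ that your $\deg(v_i)$-enumeration relies on may well have become a negative one.

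That points to the second gap: you assume the innermost variable $x_k$ has a positive edge literal to some earlier $x_i$ in each disjunct, but this need not hold. If $x_k$ occurs only in negated literals $\neg E(x_i,x_k)$ (or in no literal), the candidate set is co-sparse---$V\setminus N(v_i)$ has size roughly $n$, not $\deg(v_i)$---and the amortization against $\sum_v \deg(v)=2m$ does not apply; you give no alternate argument for this case. Finally, the quantifier-reordering step for disconnected pattern graphs, which you correctly flag as ``the most delicate part,'' is not a bookkeeping detail but a substantive obstruction: $\forall x\,\exists y$ does not commute to $\exists y\,\forall x$, and disconnectedness of $H$ in one disjunct gives you no license to reorder a mixed prefix whose variables land in different components. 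Without a resolution of these three issues the $O(m^{k-1})$ bound is not established; a cleaner route is to process the quantifiers from the inside out without any normal-form split, observing that once the outer variables are fixed the truth of the innermost quantifier depends only on the adjacency pattern of the last variable to the fixed ones, and charging work to the edges incident to the fixed vertices.
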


\begin{theorem}
If a formula does not have the form $\exists^{k-1}\forall$, there exists a $(O(m^{k-2}),O(m^{k-1}))$ pseudo-deterministic proof for finding a setting to the first set of existential quantifiers of that formula.
\end{theorem}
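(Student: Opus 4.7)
The plan is to find the lex-first setting of the initial existential block and verify it using the co-nondeterministic model-checking proof of \cite{CGI16}. Let $j \geq 1$ be the length of the leading block of existential quantifiers, so $Q_1 = \cdots = Q_j = \exists$ and (if $j<k$) $Q_{j+1} = \forall$. Write $\Phi(x_1,\ldots,x_j)$ for the residual formula $Q_{j+1} x_{j+1} \cdots Q_k x_k \, \psi$. The canonical solution will be the lexicographically first tuple $(x_1^*,\ldots,x_j^*) \in X_1 \times \cdots \times X_j$ with $\Phi(x_1^*,\ldots,x_j^*)$ true.

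First, the prover computes $(x_1^*,\ldots,x_j^*)$ greedily: for $i = 1, \ldots, j$ it scans $v \in X_i$ in increasing order and keeps the first $v$ for which $\exists x_{i+1}\cdots x_j \, \Phi(x_1^*,\ldots,x_{i-1}^*,v,x_{i+1},\ldots,x_j)$ is true. Each such check is model-checking a $(k-i)$-quantifier formula, costing $O(m^{k-i-1})$ deterministically, and gives prover time $\sum_{i=1}^{j} n \cdot m^{k-i-1} = O(m^{k-1})$ overall. The prover also produces the co-nondeterministic certificate described next.

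Next, the verifier performs two checks. It first confirms $\Phi(x_1^*,\ldots,x_j^*)$ directly by running the deterministic model checker on the remaining $k-j$ quantifiers in time $O(m^{k-j-1}) \leq O(m^{k-2})$ (using $j \geq 1$). Second, it verifies the nondeterministic certificate that the auxiliary $k$-quantifier formula
$$\exists x_1 \cdots \exists x_j \, \forall x_{j+1} \, Q_{j+2} x_{j+2} \cdots Q_k x_k \, \bigl[\, \mathrm{lex}(x_1,\ldots,x_j;\, x_1^*,\ldots,x_j^*) \,\wedge\, \psi(x_1,\ldots,x_k) \,\bigr]$$
is false, where $\mathrm{lex}$ (asserting that $(x_1,\ldots,x_j)$ is lexicographically strictly earlier than $(x_1^*,\ldots,x_j^*)$) is expressed as a quantifier-free Boolean combination of vertex-comparison atoms inside the body. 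The quantifier prefix of this formula is identical to that of the original, so by hypothesis it is not of the excluded form $\exists^{k-1}\forall$, and \cite{CGI16} yields verifier time $O(m^{k-2})$ on this certificate. Soundness follows because this auxiliary formula is false iff no lexicographically earlier tuple satisfies $\Phi$, which, combined with the direct check of $\Phi(x_1^*,\ldots,x_j^*)$, pins down the unique canonical answer.

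The main obstacle is ensuring that the auxiliary lex-refutation formula stays outside the forbidden shape $\exists^{k-1}\forall$. This is why the $\mathrm{lex}$ predicate is folded into the quantifier-free body rather than introduced via a fresh quantifier block: it preserves the quantifier prefix of the input, so the excluded pattern is avoided exactly when it is avoided by the input formula, which is precisely our hypothesis.
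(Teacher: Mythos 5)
Your proof is correct and achieves the stated bounds, but it takes a noticeably different route from the paper's. The paper decomposes the lex-minimality check into $j$ separate co-nondeterministic certificates, one per position of the existential block: for each position it fixes the earlier coordinates to the canonical values and asks whether some strictly smaller value of the current coordinate can still satisfy the residual formula, expressing the ``smaller value'' constraint by restricting the quantifier domain $X'_\ell = X_\ell \cap \{x : x < x^*_\ell\}$. Each such subformula has strictly fewer than $k$ quantifiers (and one verifies it is never of the excluded shape precisely because the original avoids $\exists^{k-1}\forall$), so the per-check co-nondeterministic cost is at most $O(m^{k-2})$. You instead bundle everything into a single $k$-quantifier auxiliary formula by conjoining the $\mathrm{lex}$ predicate into the body and observing that, after pushing $\mathrm{lex}$ past the inner quantifiers, the prefix is identical to the input's and hence avoids the forbidden shape by hypothesis directly. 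This is a cleaner high-level argument with one certificate instead of $j$. The only wrinkle is definitional: the paper's notion of a graph property requires the quantifier-free body $\psi$ to be a formula ``on edge predicates,'' whereas your $\mathrm{lex}$ conjunct introduces unary comparison atoms $x_\ell < x^*_\ell$ and $x_\ell = x^*_\ell$. Since $x^*_\ell$ are constants these are just unary predicates on vertices, which the underlying model-checking machinery of \cite{CGI16} handles, and the paper's own domain-restriction trick is morally the same device, so this is a presentational rather than substantive deviation---but it is worth noting explicitly that you are appealing to a mild generalization of the theorem as literally stated.
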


\begin{proof}
Suppose the formula has $i$ existential quantifiers at the beginning. The interactive proof proceeds as follows. \begin{enumerate} \item The prover sends $x_1,...,x_i$ that is the lexicographically first set such that $Q_{i+1} x_{i+1}...Q_kx_k \psi(x_1,...,x_i)$ is true. 
\item The prover also sends a proof that for any $1 \leq j \leq i$, $\exists x'_j < x_j Q_{j+1} x_{j+1}...Q_kx_k \psi(x_1,...,x_{j-1})$ is false.

\item The verifier checks whether $Q_{i+1} x_{i+1}...Q_kx_k \psi(x_1,...,x_i)$  using brute force and outputs $\bot$ if $Q_{i+1} x_{i+1}...Q_kx_k \psi(x_1,...,x_i)$ is false.
\item The verifier checks the proofs that for any $1 \leq j \leq i$, $\exists x'_j < x_j Q_{j+1} x_{j+1}...Q_kx_k \psi(x_1,...,x_{j-1})$ is false and accepts if the proofs are correct, otherwise it outputs $\bot$. 

\end{enumerate}

If the first $i$ quantifiers are $\exists$, then we can find $x_1,...,x_i$ such that $Q_{i+1} x_{i+1}...Q_kx_k \psi(x_1,...,x_i)$ nondeterministically in time $O(m^{k-i})$ for any $1 \leq j \leq i$, and we can check for any $1 \leq j \leq i$ that $\exists x'_j < x_j Q_{j+1} x_{j+1}...Q_kx_k \psi(x_1,...,x_{j-1})$ in co-nondeterministic time $O(m^{k - 2})$ by setting $X'_{j} = X_{j} \cap \{ x | x < x_j \}$. For both of these checks, the prover has to solve a model checking problem with at most $k$ quantifiers, which has complexity $O(m^{k-1})$. This shows that there is a $O(m^{k-2})$ pseudo-deterministic proof where the prover runs in time $O(m^{k-1})$ for finding a setting to the first set of existential quantifiers of a formula, if the formula does not have the form $\exists^{k-1}\forall$.

\end{proof}

\end{subsection}

\begin{subsection}{Problems equivalent to All-Pairs Shortest Path}

The All-Pairs Shortest Path problem has been the focus of much research in fine-grained complexity. It has been shown by \cite{VW13,WW10} that many problems related to graphs reduce to the All-Pairs Shortest Path problem and vice versa, so finding a faster algorithm for any one of these problems would yield a fast algorithm for a host of graph problems. \cite{CGI16} shows that the Zero Weight Triangle problem, which is equivalent to the All-Pairs Shortest Path problem under subcubic reductions \cite{WW10}, has a $O(n^{3 - \epsilon})$ co-nondeterministic algorithm, which is faster than all known deterministic algorithms. We use this to construct a pseudo-deterministic proof for the Zero Weight Triangle problem.  

\begin{definition}
The Zero Weight Triangle problem is given a graph $G = (V,E)$ and edge weights $e(i,j)$, find $i,j,k \in V$ such that $e(i,j) + e(i,k) + e(j,k) = 0$. 
\end{definition}

\begin{theorem}[\cite{CGI16}]
The Zero Weight Triangle problem has a nondeterministic proof and a co-nondeterministic proof where the verifier runs in time $O(n^{2 + \omega/3})$, where $\omega$ is the largest number such that matrix multiplication is in time $O(n^{\omega})$. 
\end{theorem}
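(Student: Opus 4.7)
The plan is to exploit a block-decomposition of $V$ that matches well with fast matrix multiplication: the heavy counting work is pushed to the prover, while the verifier's job per block reduces to one matrix product. First I would partition $V$ into $s := n^{2/3}$ blocks $B_1, \ldots, B_s$, each of size $n^{1/3}$. Every zero-weight triangle has its three vertices in some triple of blocks $(B_a, B_b, B_c)$, and the number of such triples is $O((n^{2/3})^3) = O(n^2)$. Thus it suffices to design, for each block triple, a witness the verifier can check in $O(n^{\omega/3})$ time; summing over the $O(n^2)$ triples then yields the claimed $O(n^{2+\omega/3})$ verifier bound.

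For the nondeterministic direction (existence of a zero-weight triangle) the proof is essentially trivial: the prover sends a single triple $(i,j,k)$ and the verifier checks $e(i,j) + e(i,k) + e(j,k) = 0$ in $O(1)$ time, comfortably within the target bound.

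For the co-nondeterministic direction (non-existence) the plan is for the prover to send, for each block triple $(B_a, B_b, B_c)$, a short algebraic certificate allowing the verifier to rule out every triple $(i \in B_a, j \in B_b, k \in B_c)$ satisfying $e(i,j) + e(i,k) + e(j,k) = 0$ using a single $n^{1/3} \times n^{1/3}$ fast matrix multiplication. Concretely, I would encode the partial sums $e(i,j) + e(j,k)$ and the targets $-e(i,k)$ as matrices over $\F_p$ for an appropriately chosen prime $p$, and use the matrix product to verify that they never coincide; this is the usual way of reducing a triangle-type existence question within a tripartite block to a single fast matrix multiplication.

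The hard part will be soundness of the per-block-triple certificate: a dishonest prover must not be able to conceal a genuine zero triangle by misreporting the aggregated sums. I would handle this via a randomized algebraic check in the spirit of polynomial identity testing over $\F_p$, which the prover cannot game except with negligible probability; this is the heart of the argument in \cite{CGI16}. Once this per-block protocol is in place and shown sound, the theorem follows by aggregating the $O(n^2)$ per-block-triple costs, each of $O(n^{\omega/3})$ time, to give total verifier time $O(n^{2+\omega/3})$.
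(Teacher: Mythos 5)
Your block decomposition gets the arithmetic right, but the proposal breaks down exactly at the step you call ``the usual way'': there is no reduction of zero-\emph{weight} triangle detection in a tripartite block on $m = n^{1/3}$ vertices to a \emph{single} $m\times m$ fast matrix product over $\F_p$. That reduction exists for Boolean (unweighted) triangle detection, but the quantity $e(i,j)+e(j,k)+e(i,k)$ involves three matrices whose entries must be \emph{added}, not multiplied and summed, and deciding whether any of the $m^3$ sums equals $0$ is itself a Zero Weight Triangle instance -- precisely the problem conjectured not to have an $m^{3-\epsilon}$ algorithm. If one tries to salvage your sketch by packing the weights into polynomial-valued matrix entries over $\F_p[x]/(x^p-1)$, the per-block cost becomes $\tilde O(m^\omega \cdot p)$, not $m^\omega$, so either $p$ must be $O(1)$ (and then the number of ``false positive'' triangles with weight $\equiv 0 \pmod p$ but nonzero weight is far too large to list and check) or the total verifier time blows past $n^{2+\omega/3}$.

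This is where the actual structure of the [CGI16] argument, which your last paragraph gestures at but does not supply, is indispensable. Their co-nondeterministic verifier does not work block-by-block; it works globally: the prover \emph{guesses} a prime $p$ for which only few triangles have weight $\equiv 0 \pmod p$ while having nonzero weight, sends $p$ together with the explicit list of those near-miss triangles, and the verifier (i) checks each listed triangle really has nonzero weight but weight $\equiv 0 \pmod p$, and (ii) independently \emph{counts} the triangles with weight $\equiv 0 \pmod p$ -- via matrix multiplication with polynomial entries of degree $<p$ / FFT -- and compares the count against the length of the prover's list. The size of $p$ is tuned to balance the cost of the counting step against the number of near misses that must be listed. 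Your appeal to ``a randomized algebraic check in the spirit of polynomial identity testing'' also mislocates the source of soundness: the [CGI16] verifier here is deterministic after receiving the prover's message (it is an $\NP\cap\coNP$-type algorithm, not an $\MA$ protocol like Williams' Orthogonal Vectors proof used elsewhere in this paper). The nondeterministic direction is, as you say, trivial, but the co-nondeterministic direction needs the prime-plus-near-miss-list mechanism spelled out; without it there is no sound per-block or global certificate, and the claimed verifier bound is not justified.
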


\begin{theorem}
The Zero Weight Triangle problem has an $(\tilde{O}(n^{2 + \omega/3}),\tilde{O}(n^3))$ pseudo-deterministic proof.
\end{theorem}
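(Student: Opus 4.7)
The plan is to follow the template used earlier in this section: have the prover commit to the lexicographically first zero-weight triangle $(i,j,k)$ (under the ordering on ordered triples with $i<j<k$) and attach succinct certificates that no lexicographically earlier zero-weight triangle exists. The prover locates $(i,j,k)$ by direct enumeration in $\tilde{O}(n^3)$ time, matching the target prover budget. The verifier spends $O(1)$ to confirm $e(i,j)+e(i,k)+e(j,k)=0$, and will ultimately output $(i,j,k)$ only if all of the non-existence checks below also pass; otherwise it outputs $\bot$.

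I would split the lex-earlier triples into three natural categories and handle each separately. Category (a): triangles whose smallest vertex is less than $i$. Category (b): triangles $(i,j',k')$ with $j'<j$. Category (c): triangles $(i,j,k')$ with $k'<k$. Categories (b) and (c) are cheap enough for the verifier to resolve unaided: (c) is a scan over at most $n$ choices of $k'$ taking $O(n)$ time, while (b) is a one-fixed-vertex zero-weight triangle problem which brute force handles in $O(n^2)$ time, comfortably within the budget since $n^2 \leq n^{2+\omega/3}$. The interesting case is (a), for which I would ask the prover to supply the CGI16 co-nondeterministic zero-weight-triangle certificate adapted to the tripartite instance with vertex classes $A=\{1,\dots,i-1\}$ and $B=C=V$ and the original edge weights. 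Any original zero-weight triangle whose smallest vertex lies in $A$ appears (after ordering the other two vertices) as a zero-weight tripartite triple in this instance, so certifying the absence of such tripartite triples rules out every category-(a) bad triangle; the tripartite instance has total size $O(n)$, so by the CGI16 result its certificate is verifiable in $\tilde{O}(n^{2+\omega/3})$ time and producible in $\tilde{O}(n^3)$ time.

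The main obstacle I anticipate is not the overall protocol design, which is a routine instantiation of the general existence/non-existence template, but rather checking that the CGI16 co-nondeterministic proof carries over cleanly to the tripartite, vertex-class-restricted version with the same $\tilde{O}(n^{2+\omega/3})$ verifier complexity. Their argument partitions the vertex set into blocks of size $\Theta(n^{1/3})$ and reduces the zero-weight check to a batch of matrix-multiplication computations on block-level weight tables; applying the same partitioning independently to $A$, $B$ and $C$ (padding to a common block size) preserves both the prover and verifier asymptotics. With this extension in place, completeness follows by running the honest CGI16 provers on each required instance, and canonical soundness follows because the verifier accepts $(i,j,k)$ only when it is both a genuine zero-weight triangle and every lex-earlier candidate has been ruled out by a valid CGI16 certificate or direct computation, so a cheating prover cannot induce a non-canonical output except with the negligible error inherited from the underlying CGI16 protocols.
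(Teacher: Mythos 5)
Your proposal follows essentially the same route as the paper: reduce to a tripartite instance, restrict the first vertex class to $\{1,\dots,i-1\}$ (equivalently, delete edges out of vertices $\geq i$ in the first part) and invoke the CGI16 co-nondeterministic zero-weight-triangle certificate to rule out any earlier first coordinate, then let the verifier brute-force the remaining cases $(i,j',k')$ with $(j',k')$ lexicographically before $(j,k)$. The one point the paper makes that you gloss over is that the prover's $\tilde{O}(n^3)$ budget for producing that certificate is a \emph{randomized} bound: CGI16 only gives the certificate nondeterministically, and the paper must separately argue (by the same random-prime-sampling analysis used for the 3-SUM prover) that an honest prover can actually locate a good prime and build the certificate in expected $\tilde{O}(n^3)$ time, so attributing ``producible in $\tilde{O}(n^3)$ time'' to the CGI16 result alone slightly overstates what that reference provides.
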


\begin{proof}
 There is an easy reduction from Zero Weight Triangle to Zero Weight Triangle on tripartite graphs. Then, we remove all edges in the first column going from $i' \geq i$ to $j$, and thus the resulting graph has a triangle with zero weight iff there exists a triangle in the original graph with zero weight and $i' < i$, where $i$ is the smallest vertex in the claimed lexicographically first zero weight triangle. A similar argument as the argument showing the prover for 3-SUM runs in randomized time $\tilde{O}(n^2)$ shows that the prover for the pseudo-deterministic proof of Zero Weight Triangle runs in randomized time $\tilde{O}(n^3)$. Specifically, both proofs in \cite{CGI16} use a random prime $p$ and Fast Fourier Transform to quickly compute sums $\pmod p$. The verifier for the proof in $\cite{CGI16}$ takes time The interactive proof proceeds as follows.
 \begin{enumerate}
     \item The prover sends a triangle $i,j,k$ that is the claimed lexicographically first triangle of zero weight, as well as a proof that there is no $(i',j',k')$ with zero weight such that $i' < i$.
     \item The verifier checks whether $i,j,k$ is a zero weight triangle and checks the proof that there is no $(i',j',k')$ with zero weight such that $i' < i$. 
     \item The verifier uses brute force to make sure there is no $j',k'$ such that either $j' < j$ or $j' = j$ and $k' < k$ and $(i,j',k')$ is a zero weight triangle. If all of these checks pass the verifier outputs $(i,j,k)$ otherwise it outputs $\bot$. 
 \end{enumerate}
 To see that the verifier runs in time $\tilde{O}(n^{2 + \omega/3})$, that is the running time for the proof in \cite{CGI16} that a graph does not have a triangle with zero weight. 
\end{proof}
\end{subsection}

\end{section}

\section{General Conditions}

\begin{lemma}
\label{lemma}
Suppose we have a search problem $R(x,y)$ such that $|y| = poly(x)$, finding the lexicographically first $y$ given $x$ such that $R(x,y)$ takes time $t_1(n)$, computing $R(x,y)$ takes time $t_2(n)$, and $y$ can be written as $y_1...y_k$ such that the following holds: 
\begin{itemize}
\item Given $x,y_1,...,y_i$, the problem $\exists z_i < y_i \exists y_{i+1},...,y_k R(x,y_1,...,y_{i-1},z_i,y_{i+1}...y_k)$ can be solved in co-nondeterministic time $t_3(n)$ where the prover runs in time $t_4(n)$.
\end{itemize}
Then there exists a $(t_2(n) + k * t_3(n), t_1(n) + k * t_4(n))$ pseudo-deterministic proof that outputs the lexicographically first $y$ such that $R(x,y)$.
\end{lemma}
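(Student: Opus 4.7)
The plan is to build the protocol exactly in the recursive style hinted at in the Outline of Techniques. The prover first computes the lexicographically first solution $y = y_1 \cdots y_k$ in time $t_1(n)$, and sends $y$ to the verifier together with, for each $i \in \{1,\ldots,k\}$, a co-nondeterministic certificate that the statement $\exists z_i < y_i\, \exists y_{i+1},\ldots,y_k\, R(x,y_1,\ldots,y_{i-1},z_i,y_{i+1}\ldots y_k)$ is false. By hypothesis each such certificate can be produced in time $t_4(n)$, giving total prover time $t_1(n) + k\cdot t_4(n)$. The verifier first checks $R(x,y)$ directly in time $t_2(n)$, then verifies each of the $k$ co-nondeterministic certificates in time $t_3(n)$, for a total verifier running time of $t_2(n) + k\cdot t_3(n)$. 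If every check succeeds it outputs $y$; otherwise it outputs $\bot$.

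Completeness is immediate: when the prover is honest, $R(x,y)$ holds, and the definition of lexicographically first ensures that for each $i$ the existential claim above is false, so an honest certificate of non-existence exists and is accepted. For canonical soundness, suppose a cheating prover convinces the verifier to output some $y' \neq \bot$. The verifier's first check forces $R(x,y')$. If $y' \neq c(x)$, let $i$ be the least index where $y'_i \neq c(x)_i$; by definition of lexicographically first, $c(x)_i < y'_i$ and $c(x)_j = y'_j$ for $j<i$, so $c(x)$ is itself a witness making the $i$th existential claim true. This contradicts the soundness of the $i$th co-nondeterministic certificate, so no such $y'$ can be accepted. Standard soundness (for $x \notin L_R$) is likewise enforced by the check $R(x,y)$.

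The only delicate point, and the place where I would expect the main obstacle to lie, is composition: one must verify that the soundness guarantees of the $k$ inner certificates combine cleanly in the pseudo-deterministic protocol as defined in Section~2. For genuinely (co-)nondeterministic certificates there is no error to amplify and a union bound is unnecessary, so the parameters claimed in the statement are exactly as analyzed. In the more general setting where the inner proofs are Arthur--Merlin or interactive with probabilistic soundness (as in the $\MA$ applications earlier), one first boosts each inner proof to soundness error $1/\mathrm{poly}(k)$ and then takes a union bound over the $k$ rounds; since $k$ is polynomial in $|x|$ this only affects the running times by logarithmic factors and preserves the advertised $(t_2(n) + k\cdot t_3(n),\, t_1(n) + k\cdot t_4(n))$ parameters up to the $\tilde{O}(\cdot)$ notation used throughout the paper.
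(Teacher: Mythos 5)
Your proposal matches the paper's proof essentially exactly: the prover sends the lexicographically first $y$ together with $k$ co-nondeterministic non-existence certificates, and the verifier checks $R(x,y)$ plus all $k$ certificates, giving the claimed time bounds. Your soundness argument via the least differing index $i$, and your remark about amplifying probabilistic inner proofs, are in fact more explicit than what the paper writes (which simply asserts a cheating prover cannot succeed, and contains a small typo, charging $k\cdot t_2(n)$ rather than $k\cdot t_3(n)$ for the verifier's second stage).
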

\begin{proof}
Our algorithm proceeds in two stages: in the first stage, the prover gives $y$, taking time $t_1(n)$ to find the lexicographically first $y$ such that $R(x,y)$, and the verifier checks whether $R(x,y)$ and outputs $\bot$ otherwise; this takes time $t_2(n)$. In the next stage we prove that $y$ is the lexicographically first such $y$; that is, for all $z <_{lex} y$, $\neg R(x,z)$. To do so, we only need to check that there is no $i$ such that $\exists z_i < y_i \exists z_{i+1},...,z_k \text{ such that } R(y_1,y_2,...,y_{i-1},z_i,z_{i+1},...,z_k)$ for $1 \leq i \leq k$. Since we have to do this $k$ times, the total time of this stage is $k * (t_2(n))$ for the verifier and $k * t_4(n)$ for the prover. Our algorithm clearly outputs the lexicographically first $y$ such that $R(x,y)$, and a cheating prover cannot make the verifier output a different $y$. 
\end{proof}

\begin{section}{Conclusions and Open Problems}
We defined the notion of doubly-efficient pseudo-deterministic proofs and gave a number of examples of search problems for which we showed doubly-efficient pseudo-deterministic proofs. In all of these cases,  the verifier runs faster than the best known probabilistic algorithm for the problem which can offer significant improvements for settings in which a more powerful computer (cloud, special purpose device, centralized authority) can perform the computation first and prove it to a significantly less powerful user.
In all these cases the prover's computation increases by a factor subpolynomial in $n$ from what is necessary to find any solution of the problem at hand, even without requiring that the solution be canonical ( e.g. without requiring the solution to be lexicographically smallest).
An interesting problem would be show that this is true in general. Namely, that for any doubly-efficient pseudo-deterministic proof  the computation of the prover need be no more than whats necessary to find the canonical solution. 
Finally,  we remark that in all the cases we treated,  the canonical solution was  the lexicographically smallest (or largest as in the LP case) but other canonical solutions are possible.
\end{section}

\section*{Acknowledgments}
We would like to thank Andrea Lincoln for preliminary discussion on pseudo-deterministic proofs for fine-grained complexity. Dhiraj Holden was supported by NSF MACS - CNS-1413920 and Michel Goemans was partially supported under ONR Contract N00017-17-1-2177.

\bibliographystyle{plain}
\bibliography{RRR}

\begin{thebibliography}{10}

\bibitem{B16}
L{\'a}szl{\'o} Babai.
\newblock Graph isomorphism in quasipolynomial time.
\newblock In {\em Proceedings of the forty-eighth annual ACM symposium on
  Theory of Computing}, pages 684--697. ACM, 2016.

\bibitem{BackesBFR15}
Michael Backes, Manuel Barbosa, Dario Fiore, and Raphael~M Reischuk.
\newblock Adsnark: Nearly practical and privacy-preserving proofs on
  authenticated data.
\newblock In {\em 2015 IEEE Symposium on Security and Privacy}, pages 271--286.
  IEEE, 2015.

\bibitem{BackesFR13}
Michael Backes, Dario Fiore, and Raphael~M Reischuk.
\newblock Verifiable delegation of computation on outsourced data.
\newblock In {\em Proceedings of the 2013 ACM SIGSAC conference on Computer \&
  communications security}, pages 863--874. ACM, 2013.

\bibitem{Ben-SassonCGTV13}
Eli Ben-Sasson, Alessandro Chiesa, Daniel Genkin, Eran Tromer, and Madars
  Virza.
\newblock Snarks for c: Verifying program executions succinctly and in zero
  knowledge.
\newblock In {\em Annual Cryptology Conference}, pages 90--108. Springer, 2013.

\bibitem{Ben-SassonCTV14}
Eli Ben-Sasson, Alessandro Chiesa, Eran Tromer, and Madars Virza.
\newblock Succinct non-interactive zero knowledge for a von neumann
  architecture.
\newblock In {\em 23rd $\{$USENIX$\}$ Security Symposium ($\{$USENIX$\}$
  Security 14)}, pages 781--796, 2014.

\bibitem{BCH16}
Michele Borassi, Pierluigi Crescenzi, and Michel Habib.
\newblock Into the square: On the complexity of some quadratic-time solvable
  problems.
\newblock {\em Electronic Notes in Theoretical Computer Science}, 322:51--67,
  2016.

\bibitem{BraunFRSBW13}
Benjamin Braun, Ariel~J Feldman, Zuocheng Ren, Srinath Setty, Andrew~J
  Blumberg, and Michael Walfish.
\newblock Verifying computations with state.
\newblock In {\em Proceedings of the Twenty-Fourth ACM Symposium on Operating
  Systems Principles}, pages 341--357. ACM, 2013.

\bibitem{CGI16}
Marco~L Carmosino, Jiawei Gao, Russell Impagliazzo, Ivan Mihajlin, Ramamohan
  Paturi, and Stefan Schneider.
\newblock Nondeterministic extensions of the strong exponential time hypothesis
  and consequences for non-reducibility.
\newblock In {\em Proceedings of the 2016 ACM Conference on Innovations in
  Theoretical Computer Science}, pages 261--270. ACM, 2016.

\bibitem{chan2018more}
Timothy~M Chan.
\newblock More logarithmic-factor speedups for 3sum,(median,+)-convolution, and
  some geometric 3sum-hard problems.
\newblock In {\em Proceedings of the Twenty-Ninth Annual ACM-SIAM Symposium on
  Discrete Algorithms}, pages 881--897. SIAM, 2018.

\bibitem{ChiesaTV15}
Alessandro Chiesa, Eran Tromer, and Madars Virza.
\newblock Cluster computing in zero knowledge.
\newblock In {\em Annual International Conference on the Theory and
  Applications of Cryptographic Techniques}, pages 371--403. Springer, 2015.

\bibitem{Cohen18}
Michael Cohen, Yin-Tat Lee, and Zhao Song.
\newblock Solving linear programs in the current matrix multiplication time.
\newblock {\em CoRR}, abs/1810.07896, 2018.

\bibitem{CormodeMT12}
Graham Cormode, Michael Mitzenmacher, and Justin Thaler.
\newblock Practical verified computation with streaming interactive proofs.
\newblock In {\em Proceedings of the 3rd Innovations in Theoretical Computer
  Science Conference}, pages 90--112. ACM, 2012.

\bibitem{CostelloFHKKNPZ15}
Craig Costello, C{\'e}dric Fournet, Jon Howell, Markulf Kohlweiss, Benjamin
  Kreuter, Michael Naehrig, Bryan Parno, and Samee Zahur.
\newblock Geppetto: Versatile verifiable computation.
\newblock In {\em 2015 IEEE Symposium on Security and Privacy}, pages 253--270.
  IEEE, 2015.

\bibitem{BGO97}
Mark De~Berg, Marko~M de~Groot, and Mark~H Overmars.
\newblock Perfect binary space partitions.
\newblock {\em Computational geometry}, 7(1-2):81--91, 1997.

\bibitem{FioreGP14}
Dario Fiore, Rosario Gennaro, and Valerio Pastro.
\newblock Efficiently verifiable computation on encrypted data.
\newblock In {\em Proceedings of the 2014 ACM SIGSAC Conference on Computer and
  Communications Security}, pages 844--855. ACM, 2014.

\bibitem{GO95}
Anka Gajentaan and Mark~H Overmars.
\newblock On a class of o(n\^{}2) problems in computational geometry.
\newblock {\em Computational geometry}, 5(3):165--185, 1995.

\bibitem{GG11}
Eran Gat and Shafi Goldwasser.
\newblock Probabilistic search algorithms with unique answers and their
  cryptographic applications.
\newblock In {\em Electronic Colloquium on Computational Complexity (ECCC)},
  volume~18, page 136, 2011.

\bibitem{GoemansNotes}
Michel Goemans.
\newblock 18.415/6.854 advanced algorithms lecture 11, 2008.
\newblock
  URL:\url{https://ocw.mit.edu/courses/electrical-engineering-and-computer-science/6-854j-advanced-algorithms-fall-2008/lecture-notes/lec11.pdf}.

\bibitem{GR18}
Oded Goldreich and Guy~N Rothblum.
\newblock Simple doubly-efficient interactive proof systems for
  locally-characterizable sets.
\newblock In {\em 9th Innovations in Theoretical Computer Science Conference
  (ITCS 2018)}. Schloss Dagstuhl-Leibniz-Zentrum fuer Informatik, 2018.

\bibitem{GGH18}
Shafi Goldwasser, Ofer Grossman, and Dhiraj Holden.
\newblock Pseudo-deterministic proofs.
\newblock In {\em 9th Innovations in Theoretical Computer Science Conference
  (ITCS 2018)}. Schloss Dagstuhl-Leibniz-Zentrum fuer Informatik, 2018.

\bibitem{GKR08}
Shafi Goldwasser, Yael~Tauman Kalai, and Guy~N. Rothblum.
\newblock Delegating computation: interactive proofs for muggles.
\newblock In {\em STOC}, pages 113--122, 2008.

\bibitem{KosbaPPSST14}
Ahmed~E Kosba, Dimitrios Papadopoulos, Charalampos Papamanthou, Mahmoud~F
  Sayed, Elaine Shi, and Nikos Triandopoulos.
\newblock $\{$TRUESET$\}$: Faster verifiable set computations.
\newblock In {\em 23rd $\{$USENIX$\}$ Security Symposium ($\{$USENIX$\}$
  Security 14)}, pages 765--780, 2014.

\bibitem{ParnoHG016}
Bryan Parno, Jon Howell, Craig Gentry, and Mariana Raykova.
\newblock Pinocchio: Nearly practical verifiable computation.
\newblock In {\em 2013 IEEE Symposium on Security and Privacy}, pages 238--252.
  IEEE, 2013.

\bibitem{PW10}
Mihai P{\u{a}}tra{\c{s}}cu and Ryan Williams.
\newblock On the possibility of faster sat algorithms.
\newblock In {\em Proceedings of the twenty-first annual ACM-SIAM symposium on
  Discrete Algorithms}, pages 1065--1075. SIAM, 2010.

\bibitem{RRR}
Omer Reingold, Guy~N Rothblum, and Ron~D Rothblum.
\newblock Constant-round interactive proofs for delegating computation.
\newblock In {\em Proceedings of the forty-eighth annual ACM symposium on
  Theory of Computing}, pages 49--62. ACM, 2016.

\bibitem{Ben-SassonCG0MTV14}
Eli~Ben Sasson, Alessandro Chiesa, Christina Garman, Matthew Green, Ian Miers,
  Eran Tromer, and Madars Virza.
\newblock Zerocash: Decentralized anonymous payments from bitcoin.
\newblock In {\em 2014 IEEE Symposium on Security and Privacy}, pages 459--474.
  IEEE, 2014.

\bibitem{SettyBVBPW12}
Srinath Setty, Benjamin Braun, Victor Vu, Andrew~J Blumberg, Bryan Parno, and
  Michael Walfish.
\newblock Resolving the conflict between generality and plausibility in
  verified computation.
\newblock In {\em Proceedings of the 8th ACM European Conference on Computer
  Systems}, pages 71--84. ACM, 2013.

\bibitem{SettyVPBBW12}
Srinath Setty, Victor Vu, Nikhil Panpalia, Benjamin Braun, Andrew~J Blumberg,
  and Michael Walfish.
\newblock Taking proof-based verified computation a few steps closer to
  practicality.
\newblock In {\em Presented as part of the 21st $\{$USENIX$\}$ Security
  Symposium ($\{$USENIX$\}$ Security 12)}, pages 253--268, 2012.

\bibitem{SettyMBW12}
Srinath~TV Setty, Richard McPherson, Andrew~J Blumberg, and Michael Walfish.
\newblock Making argument systems for outsourced computation practical
  (sometimes).
\newblock In {\em NDSS}, volume~1, page~17, 2012.

\bibitem{Thaler13}
Justin Thaler.
\newblock Time-optimal interactive proofs for circuit evaluation.
\newblock In {\em CRYPTO (2)}, pages 71--89, 2013.

\bibitem{ThalerRMP12}
Justin Thaler, Mike Roberts, Michael Mitzenmacher, and Hanspeter Pfister.
\newblock Verifiable computation with massively parallel interactive proofs.
\newblock In {\em HotCloud}, 2012.

\bibitem{Vaidya89}
Pravin Vaidya.
\newblock Speeding-up linear programming using fast matrix multiplication.
\newblock In {\em FOCS}, 1989.

\bibitem{V15}
Virginia Vassilevska~Williams.
\newblock Hardness of easy problems: Basing hardness on popular conjectures
  such as the strong exponential time hypothesis (invited talk).
\newblock In {\em 10th International Symposium on Parameterized and Exact
  Computation (IPEC 2015)}. Schloss Dagstuhl-Leibniz-Zentrum fuer Informatik,
  2015.

\bibitem{VuSBW13}
Victor Vu, Srinath Setty, Andrew~J Blumberg, and Michael Walfish.
\newblock A hybrid architecture for interactive verifiable computation.
\newblock In {\em 2013 IEEE Symposium on Security and Privacy}, pages 223--237.
  IEEE, 2013.

\bibitem{WahbySRBW15}
Riad~S Wahby, Srinath~TV Setty, Zuocheng Ren, Andrew~J Blumberg, and Michael
  Walfish.
\newblock Efficient ram and control flow in verifiable outsourced computation.
\newblock In {\em NDSS}, 2015.

\bibitem{WalfishB15}
Michael Walfish and Andrew~J. Blumberg.
\newblock Verifying computations without reexecuting them.
\newblock {\em Commun. {ACM}}, 58(2):74--84, 2015.

\bibitem{W16}
Richard~Ryan Williams.
\newblock Strong eth breaks with merlin and arthur: Short non-interactive
  proofs of batch evaluation.
\newblock In {\em 31st Conference on Computational Complexity}, 2016.

\bibitem{W14}
Ryan Williams.
\newblock Faster decision of first-order graph properties.
\newblock In {\em Proceedings of the Joint Meeting of the Twenty-Third EACSL
  Annual Conference on Computer Science Logic (CSL) and the Twenty-Ninth Annual
  ACM/IEEE Symposium on Logic in Computer Science (LICS)}, page~80. ACM, 2014.

\bibitem{WW10}
Virginia~Vassilevska Williams and Ryan Williams.
\newblock Subcubic equivalences between path, matrix and triangle problems.
\newblock In {\em 2010 IEEE 51st Annual Symposium on Foundations of Computer
  Science}, pages 645--654. IEEE, 2010.

\bibitem{VW13}
Virginia~Vassilevska Williams and Ryan Williams.
\newblock Finding, minimizing, and counting weighted subgraphs.
\newblock {\em SIAM Journal on Computing}, 42(3):831--854, 2013.

\end{thebibliography}

\end{document}